\documentclass{article}
\usepackage{microtype}
\usepackage{graphicx}
\usepackage{subcaption}
\usepackage{booktabs}
\usepackage{algpseudocode,algorithm}
\usepackage{hyperref}
\usepackage{thmtools}
\usepackage{thm-restate}
\usepackage{geometry}
\usepackage{amsmath}
\usepackage{amssymb}
\usepackage{mathtools}
\usepackage{amsthm}
\usepackage[capitalize,noabbrev]{cleveref}
\usepackage[textsize=tiny]{todonotes}

\theoremstyle{plain}
\newtheorem{theorem}{Theorem}[section]
\newtheorem{proposition}[theorem]{Proposition}
\newtheorem{lemma}[theorem]{Lemma}
\newtheorem{corollary}[theorem]{Corollary}
\theoremstyle{definition}
\newtheorem{definition}[theorem]{Definition}

\theoremstyle{remark}
\newtheorem{remark}[theorem]{Remark}
\theoremstyle{question}
\newtheorem{question}[theorem]{Question}
\theoremstyle{informal}

\theoremstyle{claim}
\newtheorem{claim}[theorem]{Claim}

\title{Revenue Guarantees of No-Swap-Regret Dynamics in First Price Auctions}

\author{Anders Bo Ipsen \and Stratis Skoulakis}
\date{Department of Computer Science, Aarhus University, Denmark}

\begin{document}
\maketitle

\begin{abstract}
	We study the revenue of approximate correlated equilibrium in discrete first price auctions - the set of allowable bids is $\mathcal{B} = \{0, 1/k, \dots, 1 - 1/k, 1\}$ for some $k \in \mathbb{N}$. We show that the revenue of any $\epsilon$-\textit{approximate} correlated equilibrium is at least $v_2 -  \Theta(1/k)- \Theta(\epsilon k^2)$, where $v_2 \geq 0$ is the second-highest valuation. Our results establish the first polynomial convergence rates on the revenue generated by no-swap regret bidders in first-price auctions.
	For instance, if bidders admit the optimal swap regret of $\mathcal{O}(\sqrt{k T})$, then the time-averaged revenue is at least $v_2 -  \Theta(1/k) -  \Theta(\epsilon)$ after $\mathcal{O}(k^5/\epsilon^2)$ rounds.
\end{abstract}

\section{Introduction}
Ad auctions, where firms compete for advertisement spots, constitute a multi-billion dollar industry ($129$ billion in the U.S. alone in 2019 \cite{GWMS22}). Based on fundamental results in auction theory \cite{V61}, the second-price auction format had been widely adopted. However, in recent years, there has been a significant shift from second-price to first-price auctions \cite{GWMS22,despotakis21,RT20}. For instance, Google Ad Manager transitioned to first-price auctions in March 2019 \cite{GWMS22}.

In the past, second-price auctions were favored over first-price auctions for a simple reason: in a second-price auction, bidding \textit{truthfully} is a dominant strategy for every bidder \cite{V61}. This property made second-price auctions easier both to predict and analyze. In contrast, first-price auctions require bidders to deviate from straightforward bidding strategies, introducing greater complexity and unpredictability both for the bidder and the auctioneer.

The rise of machine learning and automated bidding strategies has significantly influenced bidding behavior. Even in first-price auctions, a bidder can leverage machine learning algorithms in order to adjust its bids to the bidding behavior of its competitors~\cite{W15,NST15}. In particular, the online learning framework \cite{H17} provides such decision-making algorithms, which also come with strong optimality guarantees for any bidder who adopts them. In particular, if an auction is repeated over $T$ rounds\footnote{Ad auction are repeated over a large number of rounds since firms repeatedly compete for the same keyword; such an auction can be repeated even every millisecond.}, a bidder can use such online learning algorithms to select its bids and be guaranteed that \textit{no matter the behavior of the competing bidders, its overall utility approaches the utility of the best fixed bid!} The latter guarantees are known as \textit{no-regret guarantees} and in recent years
there has been a growing interest for developing such \textit{no-regret algorithms} in various auction settings \cite{W15,H20,han2023optimal,W23,FPS18,N19a,K14,AC21,PBRP25,BDGH23,BS23,KKS24,BFO23}. Moreover, \cite{NST15,NS21} showed that online learning algorithms provide an accurate model for real-world bidding behavior.

For the reasons outlined above, a recent line of research investigates the revenue generated in first-price auctions when all bidders repeatedly choose their bids according to online learning algorithms \cite{DKT14,FLN16,GLMS21,KN22,DHTZ22,BFO23,BDGH23}. In particular this line of works tries to characterize the limiting behavior of such the \textit{bid dynamics} as the number of rounds approaches infinity. However, in practice, the number of rounds is indeed large (ad auctions occur every millisecond) yet finite. This naturally leads to the following question:

\begin{question}\label{q:1}
	If in a repeated first-price auction, all bidders use online learning algorithms, how many rounds are needed so that the revenue reaches an acceptable level?
\end{question}

\begin{remark}
	Question~\ref{q:1} is of great importance when designing auction at which learning algorithms compete. Even if the produced revenue converges to a very high value but a tremendous number of rounds is required before this happens, the auctioneer might never receive this revenue exactly because the time horizon was not sufficiently long.
\end{remark}

Our work is motivated by Question~\ref{q:1} that to the best of our knowledge has not been previously considered. In particular, we investigate Question~\ref{q:1} in the case of \textit{discrete first-price auctions} and for online learning algorithms that admit \textit{no-swap regret guarantees}~\cite{BM07}. Discrete first-auction capture settings where bidders can only submit specific discretized bids -  the set of allowable bids is the discretized $[0,1]$ interval, $\mathcal{B}:=\{0, 1/k, \ldots, 1- 1/k, 1\}$). No-swap regret algorithms are online learning algorithms that provide even stronger optimality guarantees than standard \textit{no-(external) regret} algorithms.  No-swap regret algorithms have been extensively studied to date \cite{ADFF22,F24,FP24,ZFS24,DDFG24,BAFS24,PR24} while their time-average behavior is known to be a \textit{correlated equilibrium}~\cite{Au75}.

\subsection{Correlated Equilibrium $\&$ No-Swap Dynamics}

Through some very elegant arguments, \cite{DKT14,FLN16} were able to characterize the set of \textit{correlated equilibria} in first-price auctions (limiting behavior of no-swap regret algorithms). In particular their results imply the following:
\smallskip

\noindent \textbf{Informal Theorem} \textit{Let a discrete first-price auction where $v_2 \in \mathcal{B}$ is the $2$nd highest valuation. Then the revenue of any correlated equilibrium\footnote{The original results of \cite{DKT14,FLN16} concern first-price auctions with continuous bidding space $[0,1]$ or equivalently $k \rightarrow \infty$. However their arguments can be easily transferred to the discrete setting.} is at least $v_2 - \Theta(1/k)$.
}
\smallskip

\noindent Since the time-average behavior of no-swap regret algorithms always converges to a correlated equilibrium \cite{FV97,H00}, the result above suggests the following:

\begin{claim}\label{claim:1}
	\textit{Let a discrete first-price auction that is repeated for $T$ rounds and bidders use no-swap regret algorithms. Then as $T
    \rightarrow \infty$, the time-average revenue is at least $v_2 - \Theta(1/k)$.}
\end{claim}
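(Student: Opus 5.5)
The plan is to reduce Claim~\ref{claim:1} to the Informal Theorem through the standard link between swap regret and approximate correlated equilibria, and then use an upper semicontinuity (compactness) argument to pass to the limit. First I would fix the discrete game: $n$ bidders with valuations $v_1 \geq v_2 \geq \dots$, action set $\mathcal{B}$ for each bidder, and some fixed tie-breaking rule. Let $b^t = (b_1^t,\dots,b_n^t)$ be the bid profile in round $t$, and let $\mu_T$ be the empirical distribution $\frac{1}{T}\sum_{t=1}^T \delta_{b^t}$ over $\mathcal{B}^n$. Then the time-averaged revenue equals $\mathbb{E}_{b\sim\mu_T}[\max_i b_i]$, which I denote $\mathrm{Rev}(\mu_T)$.

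Next I would use the fact that each bidder's swap regret is $o(T)$. For every bidder $i$ and every swap function $\phi:\mathcal{B}\to\mathcal{B}$,
\[
\mathbb{E}_{b\sim\mu_T}\big[u_i(\phi(b_i),b_{-i}) - u_i(b)\big] \le \frac{\mathrm{SwapReg}_i(T)}{T} =: \epsilon_T \longrightarrow 0 .
\]
So $\mu_T$ is an $\epsilon_T$-approximate correlated equilibrium. If the algorithms are randomized, I would apply the same argument to the product of mixed strategies in each round, or to the realized bids using the high-probability version of the regret bound.

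The third step is a limit argument. The set $\Delta(\mathcal{B}^n)$ is a compact simplex of finite dimension. Each CE constraint is linear in $\mu$, and the revenue is also linear in $\mu$. Suppose, for contradiction, that $\liminf_T \mathrm{Rev}(\mu_T) < v_2 - c/k$, where $c/k$ is the Informal Theorem's bound. Then some subsequence satisfies $\mathrm{Rev}(\mu_{T_j}) \le v_2 - c/k - \delta$ for a fixed $\delta>0$. By compactness, a further subsequence converges to some $\mu^\ast$. Since the constraints are continuous and $\epsilon_{T_j}\to 0$, $\mu^\ast$ is an exact correlated equilibrium. Since revenue is continuous, $\mathrm{Rev}(\mu^\ast)\le v_2 - c/k-\delta$. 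This contradicts the Informal Theorem.

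The main obstacle is that this argument gives no rate, which is exactly the point of Question~\ref{q:1}. The only step needing care is confirming that the Informal Theorem really holds in the discrete setting with the chosen tie-breaking. If I had to verify it, I would adapt the \cite{FLN16} argument. Suppose the equilibrium revenue falls well below $v_2$. Then one of the top two bidders must win rarely at low prices. That bidder could profitably deviate by swapping every low bid up to a bid just above the typical winning price. The discretization costs only an additive $O(1/k)$ in the size of this deviation. I would take this result as given, since the paper states it.
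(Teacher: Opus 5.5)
Your proposal is correct and follows the same route the paper uses for Claim~\ref{claim:1}. The paper combines the exact-CE revenue bound (the Informal Theorem, proved in the discrete setting as Theorem~\ref{t:feldman}) with the fact that no-swap-regret time averages approach the set of correlated equilibria; your compactness-and-linearity limit argument is the rigorous form of that ``converges to a correlated equilibrium'' step. Instantiate it in the paper's model (bids in $\mathcal{S}_i$ capped at $v_i$, uniform tie-breaking) rather than with action set $\mathcal{B}$ and an arbitrary tie-breaking rule, so that Theorem~\ref{t:feldman} applies verbatim and gives that the $\liminf$ of the time-averaged revenue is at least $v_2 - 4/k$.
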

\noindent Claim \ref{claim:1} suggests that as $T \rightarrow \infty$, the revenue produced by no-swap regret dynamics approaches the second highest valuation since $v_2 - \Theta(1/k) \simeq v_2$ for a sufficiently large discretization $k \in \mathbb{N}$. However, it does not provide any upper bound on the required number of rounds $T$
for the latter to happen. This is because, for finite $T$, no-swap regret dynamics converge to \textit{approximate} correlated equilibrium, whereas the results of \cite{DKT14,FLN16} concern \textit{exact} correlated equilibria.

With some effort, the arguments of both \cite{DKT14} and \cite{FLN16} can be extended to approximate correlated equilibria. However, the revenue guarantees then exhibit a quasi-polynomial dependence on $k$. In particular, by extending the arguments from \cite{FLN16}, one can establish the following\footnote{The arguments in \cite{DKT14} can also be extended to approximate correlated equilibria, but they provide a weaker guarantee of $v_2 - \mathcal{O}(1/k) - \epsilon \cdot k^{\mathrm{poly}(k)}$}.

\begin{theorem}\label{t:bad}
	The expected revenue of an $\epsilon$-approximate correlated equilibrium in a discrete first-price auction is at least $v_2 - \Theta(1/k) - \epsilon \cdot k^{\mathrm{poly}(\log k)}$.
\end{theorem}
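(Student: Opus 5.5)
The plan is to follow the exact-equilibrium argument of \cite{FLN16} step by step, replacing each equality or inequality that holds at an exact correlated equilibrium by one with an additive $\epsilon$ slack, and then tracking how these slacks propagate. The cost of the propagation is the quasi-polynomial factor $k^{\mathrm{poly}(\log k)}$ in Theorem~\ref{t:bad}.

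Fix an $\epsilon$-approximate correlated equilibrium $\mu$ over bid profiles in $\mathcal{B}^n$. Let bidders $1,2$ hold the two highest values $v_1 \ge v_2$, and let $P$ denote the winning price, i.e.\ the maximum bid. The revenue is $\mathbb{E}[P] \ge v_2 - \delta - \sum_{b < v_2 - \delta} \Pr[P \le b]$ for any $\delta$, so it suffices to bound the CDF $F(b) := \Pr[P \le b]$ for prices $b$ bounded away from $v_2$ by some $\Theta(1/k)$ margin.

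The basic tool is the swap deviation of bidder $i \in \{1,2\}$ that replaces a recommended bid $b' \le b$ by $b + 1/k$ and leaves all other recommendations unchanged. Since $\mu$ is an $\epsilon$-approximate correlated equilibrium, this deviation gains at most $\epsilon$. Its benefit is at least $(v_2 - b - 1/k)$ times the probability of the events where $i$ did not win and $P \le b$. Its cost is at most $(b + 1/k - b')$ on the events where $i$ already won at price $b'$.
\begin{itemize}
\item Summing the two inequalities for $i=1,2$ and using that at most one of the two bidders wins, the benefit terms cover essentially all of $F(b)$.
\item The cost terms are controlled by how much mass $P$ places strictly between $b'$ and $b$, which in turn is governed by the win-probability profile of each bidder.
\end{itemize}
This yields a recursive inequality of the form
\[
(v_2 - b)\,F(b) \;\lesssim\; \sum_{b' < b} (b - b')\,\Pr[P = b'] \;+\; O(\epsilon),
\]
together with the analogous inequalities for every threshold $b$. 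In the exact case ($\epsilon = 0$) these force $F(b)=0$ for all $b < v_2 - O(1/k)$, which is the \cite{FLN16} conclusion. In the approximate case I will unroll them into a bound $F(b) \le \epsilon \cdot C(b)$.

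The main obstacle is keeping $C(b)$ from growing like $k^{k}$. A naive level-by-level unrolling over the $k v_2$ grid points multiplies by a factor $\mathrm{poly}(k)$ at each step. My plan is to organize the induction by dyadic scales of the gap $d = v_2 - b$. Within a scale $[d, 2d]$, the ratio $(b - b')/(v_2 - b)$ is $O(1)$ for all relevant $b'$, so a single averaging step over the $O(dk)$ grid points of the scale bounds the total mass of $P$ in that scale by $\mathrm{poly}(k)$ times the mass in the next scale up, plus $\mathrm{poly}(k)\,\epsilon$. There are only $O(\log k)$ scales between gap $1/k$ and gap $1$. Chaining the scales therefore gives $F(b) \le \epsilon \cdot k^{O(\log k)}$ for every $b \le v_2 - \Theta(1/k)$, and summing over the $O(k)$ relevant values of $b$ gives the claimed bound $v_2 - \Theta(1/k) - \epsilon\, k^{\mathrm{poly}(\log k)}$.

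I expect the delicate part to be the within-scale averaging step. The cost term involves how bidder $i$'s winning bids are spread out, and bidders other than $1,2$ may also win at low prices. To handle the latter, I would first apply the same deviation argument to show that bidders with value below $b$ win at price above their value only with $O(\epsilon k)$ probability. With that in hand, the within-scale step can be reduced to the two-bidder recursion above.
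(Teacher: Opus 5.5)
Your plan is sound, and its outline matches the paper's: the FLN16 deviation sending every bid $\le b$ to $b+\frac1k$, chained over $O(\log k)$ dyadic scales of the gap. Your local inequality, however, is different.

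The paper picks the bidder $j\in\{1,2\}$ who wins on at most half of $\{P\le p\}$. It then obtains a one-shot bound only for $p$ in the lower half of $[p^\star,v_2]$, with a margin of order $\frac1k$. That margin is the source of its factor $k$ per scale, and hence of $k^{O(\log k)}$.

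You instead add the two bidders' deviation inequalities. This is valid once ties are treated properly: if $1$ and $2$ tie at $P\le b$, each receives the item with probability at most $\frac12$, so their combined gain is still at least $(v_2-b-\frac1k)-(b+\frac1k-P)$. It also buys more than you use.

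First, on profiles where neither $1$ nor $2$ bids the maximum, both deviators gain at least $v_2-b-\frac1k$. So your planned step about low-value bidders is unnecessary, and in fact vacuous, since $\mathcal{S}_i$ caps bids at $v_i$.

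Second, and more importantly, the obstacle you anticipate does not arise. The exact form of your inequality is $(v_2-b-\frac1k)F(b)\le\frac1k\sum_{c\le b}F(c)+2\epsilon$. Sum it with unit weights over $b=0,\frac1k,\dots,v_2-\frac3k$. Every $F(c)$ is then left with net coefficient exactly $\frac1k$ on the left, so $\frac1k\sum_{c\le v_2-3/k}F(c)\le 2\epsilon(kv_2-2)$. Hence the revenue is at least $v_2-\frac2k-2\epsilon k$, with no dyadic scales needed.

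Incidentally, these unit weights are exactly the dual assignment of Definition~\ref{d:2}. The difference is that you charge $\epsilon$ once per swap function $\{s_i\le b\}\mapsto b+\frac1k$, rather than once per pair $(s_i,s_i')$. The per-pair charging is where the paper's main bound picks up its extra factor of $k$.
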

\noindent On the positive side, the latter result can be used to provide an upper bound on the number of rounds $T$ needed so that the revenue, produced by competing no-swap regret algorithms, exceeds $v_2 - \Theta(1/k)$. On the negative side, the provided upper bound on $T$ is extremely large with respect to $k$.
In particular, even if all bidders use the optimal algorithm from \cite{BM07} with swap regret $\mathcal{R}(T):= \mathcal{O}(\sqrt{kT})$, the upper bound on $T$ becomes $k^{\mathcal{O}(\log k)}$. The latter raises the following natural question.
\begin{question}\label{q:2}
	Does any $\epsilon$-approximate correlated equilibrium of a discrete first-price auction, admits revenue that is at least $v_2 - \Theta(1/k) - \epsilon \cdot \mathrm{poly}(k)$?
\end{question}

\subsection{Our Contribution and Results}
In this work we answer Question~\ref{q:2} on the affirmative. In particular, our main result is the following:

\begin{theorem}\label{t:main}
	The expected revenue of an $\epsilon$-approximate correlated equilibrium in a discrete first-price auction is at least $v_2 - \Theta(1/k) - \Theta(\epsilon \cdot k^2)$.
\end{theorem}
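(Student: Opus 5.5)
Set-up and goal. Fix an $\epsilon$-approximate correlated equilibrium $\sigma$, a distribution over bid profiles $b\in\mathcal{B}^n$. Let bidder $1$ have the highest value $v_1$ and bidder $2$ the second highest $v_2$, and put $m(b)=\max_i b_i$ (the revenue). Ties are broken uniformly at random. The target is a tail bound on $m$: for each grid point $x<v_2-c/k$, I want
\[
\Pr_\sigma[m\le x]\;\le\;\mathcal{O}(\epsilon k)\cdot(\text{something summable}).
\]
Summing over the at most $k$ grid points below $v_2$ should then give $\mathbb{E}[m]\ge v_2-\Theta(1/k)-\Theta(\epsilon k^2)$. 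In particular, it suffices to prove $\Pr[m\le v_2-j/k]=\mathcal{O}(\epsilon k)/j$ or a similar bound. The plan is to replace the quasi-polynomial recursion from \cite{FLN16} with a linear-programming/potential argument whose error grows only linearly per level.

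Step 1: swap deviations by the top two bidders. For bidder $i\in\{1,2\}$ and each bid $b$, I consider the swap "$b\mapsto b+1/k$" and the $\epsilon$-CE constraint it gives. Write $q_i(b)$ for the probability that $i$ bids $b$ and wins, and $W_i(b)$ for the probability that $i$ bids $b$ while $m=b$ or $m=b-$something. These constraints bound how much mass can sit on profiles where $i$ loses, or ties, at a low level $m=x$: raising the bid to $x+1/k$ wins outright, with gain at least $v_i-x-1/k$. The loss is only $1/k$ per unit of winning mass at each level. I will use the shape of these constraints where the deviation is applied to the whole distribution conditioned on $i$'s recommendation, so that the errors add up to $\epsilon$ per deviation rule (one rule per bidder, not one per bid). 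Concretely, the single swap function $\phi(b)=\max(b,\,\cdot)+1/k$ applied on the low region yields one inequality with error $\epsilon$. This aggregation is what keeps the dependence polynomial.

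Step 2: a level-by-level inequality. Let $F(x)=\Pr[m\le x]$ and let $w_i(x)$ be the probability that $m=x$ and $i$ is the winner. Apply the deviation to the bid $x+1/k$ to every recommended bid $\le x$ of bidder $2$ (the loser, or a tied non-winner). Together with the same for bidder $1$, this should give
\[
(v_2-x-1/k)\,F(x)\;\le\;\sum_{y\le x}\Bigl(\tfrac{1}{k}\,w_{\mathrm{win}}(y)+(x-y)\,w_2(y)\Bigr)+\epsilon,
\]
that is, the gain from winning everywhere below $x$ is offset only by the overpayment of previous winning mass. The analogous inequality for bidder $1$ handles mass where bidder $2$ wins. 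Because $x-y\le v_2-y$, the overpayment terms can be compared with the gains at lower levels, which yields a discrete Grönwall-type recursion $g(x)\le \frac{1}{v_2-x}\bigl(\sum_{y<x}g(y)/k+\epsilon\bigr)$. Solving it naively gives products $\prod(1+1/(k(v_2-y)))$, which are harmonic-type, so it stays $\mathrm{poly}(k)$. This is the key difference from the iterated conditioning of \cite{FLN16}.

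Step 3: summation and the main obstacle. Unrolling the recursion gives $F(v_2-j/k)\le \mathcal{O}(\epsilon k\cdot \mathrm{poly})$. Then
\[
\mathbb{E}[m]\ge v_2-\tfrac{c}{k}-\sum_j \tfrac1k F(v_2-j/k),
\]
which yields the claimed $\Theta(\epsilon k^2)$ if each $F$ is $\mathcal{O}(\epsilon k)$. The hardest step is Step 2: bidders other than $1$ and $2$ may hold the winning bid at low levels. Their payoffs are unconstrained, since their values may be below $x$, so the winner's overpayment cannot be charged directly. I would first try to show that when a low-value bidder wins at $m=x<v_2$, bidders $1$ and $2$ both lose. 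A joint deviation by bidder $2$ to $x+1/k$ then gains about $v_2-x$ at no cost on that event, so such mass is directly $\mathcal{O}(\epsilon/(v_2-x))$. After that, only the $1$-versus-$2$ interplay needs the recursion. Controlling the tie-breaking cases and the exact constant in the product, to avoid losing an extra factor of $k$, is where I expect most of the care to go.
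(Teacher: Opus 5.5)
Your route is sound and genuinely different from the paper's. Two things need fixing: the inequality displayed in Step~2 is not what the deviations give, and the ``main obstacle'' in Step~3 is not one.

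For a grid point $x\le v_2-3/k$, let $\delta_x$ send every bid $\le x$ to $x+1/k$. Let $\theta_i(s)$ be bidder $i$'s tie-broken winning probability, $w_i(y)=\mathbb{E}[\theta_i;\,m=y]$ and $W_i(x)=\sum_{y\le x}w_i(y)$. On $\{m\le x\}$ the deviation wins outright, and on $\{s_i\le x<m\}$ it loses nothing. Splitting $v_i-y=(v_i-x-\tfrac{1}{k})+(x-y+\tfrac{1}{k})$, the $\epsilon$-CE constraint for $\delta_x$ gives
\[
\bigl(v_i-x-\tfrac{1}{k}\bigr)\bigl(F(x)-W_i(x)\bigr)\le\sum_{y\le x}\bigl(x-y+\tfrac{1}{k}\bigr)w_i(y)+\epsilon,\qquad i\in\{1,2\}.
\]
Add the two inequalities and use $v_1\ge v_2$, $\theta_1+\theta_2\le 1$, and the summation-by-parts identity $\sum_{y\le x}(x-y+\tfrac{1}{k})\Pr[m=y]=\tfrac{1}{k}\sum_{z\le x}F(z)$. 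This gives
\[
\bigl(v_2-x-\tfrac{1}{k}\bigr)F(x)\le\frac{1}{k}\sum_{z\le x}F(z)+2\epsilon .
\]
\begin{itemize}
\item That identity, not ``$x-y\le v_2-y$'', is what produces your Gr\"onwall recursion. Your inequality with a single $\epsilon$ and overpayment charged only to $w_2$ should be replaced by this one.
\item Profiles won by bidders $i\ge 3$ count on the left and contribute nothing on the right, so they need no separate treatment. Your proposed fix would not work anyway: a swap deviation of bidder~2 can condition only on bidder~2's own recommendation, not on the event that a third bidder wins.
\item No unrolling is needed. Summing the last display with unit weights over $x\le v_2-3/k$ makes the coefficient of every $F(z)$ exactly $1/k$, so $\tfrac{1}{k}\sum_{z\le v_2-3/k}F(z)\le 2\epsilon(kv_2-2)$. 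Hence $\mathbb{E}[m]\ge\tfrac{1}{k}\sum_{z\le v_2-3/k}(1-F(z))\ge v_2-2/k-2\epsilon k$.
\item Work with this cumulative sum rather than with each $F$ pointwise. At $x=v_2-j/k$ the pointwise bound is only of order $\epsilon k^2/j^2$.
\end{itemize}

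Compared with the paper: it writes an LP whose constraints are the single-pair swaps $s_i\to s'_i$, each with its own slack $\epsilon$. It then exhibits the dual certificate $\hat\lambda^i_{s_is'_i}=1$ for $s_i<s'_i\le v_2-2/k$, $i\in\{1,2\}$, checked profile by profile in four cases. That certificate is exactly the unit-weight sum of your threshold swaps $\delta_x$, $x\le v_2-3/k$, so both proofs use the same combination of deviations. The paper's case analysis of $\hat b_s$ verifies, coefficient by coefficient, the same linear inequality your summation by parts proves in aggregate.

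The difference is how slack is counted. The paper pays $\epsilon$ per pair, on the order of $k^2$ pairs per bidder, hence $2\epsilon k^2$. You pay $\epsilon$ once per swap function, which Definition~\ref{d:CE} permits, hence $\mathcal{O}(\epsilon k)$. This implies the theorem with a factor $k$ to spare. The paper's version buys a finite, mechanically checkable certificate; yours buys the better rate and tail bounds on $m$.

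One correction to your framing: aggregation is not what keeps the dependence polynomial, since the paper's purely pairwise LP is already polynomial. What removes the $k^{\Theta(\log k)}$ of the extension of \cite{FLN16} is charging the exact overpayment $x-y+1/k$ at all thresholds simultaneously. That extension instead charges the whole lower tail at full price at each dyadic level.
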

\noindent As already discussed, our result then translates to a way faster rate on the time-average revenue produced by no-swap regret bidders in first-price auctions.

\begin{claim}\label{claim:2}
	\textit{If each bidder uses an online learning algorithm with swap regret at most $\mathcal{R}(T)$. Then after $T$ rounds, the time-average revenue is at least}
	\(v_2 - \Theta\left(
		\frac{1}{k}\right) - k^2\frac{\mathcal{R}(T)}{T}\). \\
  If $\mathcal{R}(T) = \mathcal{O}(\sqrt{k T}
  )$ then at most $T:=\mathcal{O}\left(k^5/\epsilon^2\right)$ rounds are needed before the time-average revenue is at least $v_2 - \Theta(1/k) - \Theta(\epsilon)$.
\end{claim}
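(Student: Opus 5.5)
Claim~\ref{claim:2} follows from Theorem~\ref{t:main} by the standard correspondence between swap regret and approximate correlated equilibria, so the plan is a reduction plus an arithmetic step.

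\textbf{Step 1: the empirical distribution is an approximate correlated equilibrium.} Let $b^t=(b^t_1,\dots,b^t_n)$ be the bid profile at round $t$, and let $\sigma_T$ be the empirical distribution that puts mass $1/T$ on each $b^t$. Take a bidder $i$ and any swap function $\phi:\mathcal{B}\to\mathcal{B}$. Bidder $i$'s swap regret is at most $\mathcal{R}(T)$, so
\[
\frac{1}{T}\sum_{t=1}^T \bigl[u_i(\phi(b_i^t),b_{-i}^t)-u_i(b^t)\bigr]\le \frac{\mathcal{R}(T)}{T}.
\]
The left-hand side is exactly $\mathbb{E}_{b\sim\sigma_T}[u_i(\phi(b_i),b_{-i})-u_i(b)]$. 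Hence $\sigma_T$ is an $\epsilon$-approximate correlated equilibrium with $\epsilon=\mathcal{R}(T)/T$.

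One point must be checked: whether the paper's $\epsilon$-CE definition uses the swap-function formulation above or the per-recommendation formulation (conditioning on $b_i=b$ and deviating to $b'$). If it is the per-recommendation version, summing those constraints over recommendations is dominated by the worst swap function, so the bound $\mathcal{R}(T)/T$ still holds. If the learners face randomized play, the realized outcomes are replaced by expected utilities, or the claim is stated for the expected swap regret; I will follow whichever convention the paper's definition uses.

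\textbf{Step 2: apply Theorem~\ref{t:main}.} The time-averaged revenue equals the expected revenue under $\sigma_T$. It is therefore at least
\[
v_2-\Theta(1/k)-\Theta\!\left(k^2\frac{\mathcal{R}(T)}{T}\right).
\]
This is the first statement of the claim, with the constant absorbed into $\Theta$.

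\textbf{Step 3: plug in the regret rate.} Take $\mathcal{R}(T)=\mathcal{O}(\sqrt{kT})$. The error term becomes $k^2\sqrt{k/T}=k^{5/2}/\sqrt{T}$. Requiring this to be at most $\epsilon$ gives $T\ge k^5/\epsilon^2$, so $T=\mathcal{O}(k^5/\epsilon^2)$ rounds suffice.

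\textbf{Main obstacle.} All the substantive content is in Theorem~\ref{t:main}, which the claim takes as given. The only delicate part of the reduction is aligning the paper's normalization of $\epsilon$-approximate CE with the notion of swap regret in Step 1. If the definition is per recommended bid, summing the constraints could lose a factor of up to $|\mathcal{B}|=k+1$, and that factor would then have to be tracked through Steps 2 and 3.
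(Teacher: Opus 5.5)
Your overall route is the paper's. Claim~\ref{claim:2} is the informal version of Corollary~\ref{c:1}, which the paper obtains by combining Theorem~\ref{t:folkore} with Theorem~\ref{t:main} and then solving $k^2\sqrt{kT}/T\le\epsilon$. Your Steps~2 and~3 are fine.

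The gap is in Step~1. In the paper's definition~(\ref{eq:first}), swap regret is an expectation over bidder $i$'s own mixed strategy $x_i^t$, taken against the realized $s_{-i}^t$. Your $\sigma_T$ (the paper's $\hat{\mu}$), by contrast, is the empirical distribution of the realized bids $s_i^t\sim x_i^t$. So a swap-regret bound of $\mathcal{R}(T)$ does not give $\frac{1}{T}\sum_{t}\left[U_i(\phi(s_i^t),s_{-i}^t)-U_i(s^t)\right]\le \mathcal{R}(T)/T$. The realized sum differs from the quantity in~(\ref{eq:first}) by a sum of bounded martingale differences coming from the sampling $s_i^t\sim x_i^t$, and this sum can be of order $\sqrt{T}$. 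You noticed the issue but deferred it to ``whichever convention the paper uses''. Under the paper's convention this is exactly the step that needs proof, and it is the whole content of Theorem~\ref{t:folkore}.

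The paper closes it with a concentration bound on $X_t:=U_i(\delta(s_i^t),s_{-i}^t)-U_i(s_i^t,s_{-i}^t)$ plus a union bound over bidders and switching functions. As a result the conclusion holds only with probability $1-\delta$, and the round complexity in Corollary~\ref{c:1} picks up a $\log(n/\delta)$-dependent term. Two remarks if you take this route:
\begin{itemize}
\item Since the $X_t$ are martingale differences rather than independent variables, Azuma--Hoeffding is the natural tool.
\item The LP of Definition~\ref{d:LP} only involves single-pair deviations, so a union bound over the $n(k+1)^2$ pairs $(s_i,s_i')$ suffices. With $\mathcal{R}(T)=\mathcal{O}(\sqrt{kT})$, the resulting extra $\mathcal{O}(\sqrt{\log(nk/\delta)/T})$ does not change the $\tilde{\mathcal{O}}(k^5/\epsilon^2)$ rate.
\end{itemize}

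Alternatively, you can prove an in-expectation version with no concentration at all. Conditioned on the history, $s_i^t\sim x_i^t$ is independent of $s_{-i}^t$. Hence for each fixed $\phi$, the expected realized regret equals the expectation of the quantity bounded in~(\ref{eq:first}). It follows that $\mathbb{E}[\hat{\mu}]$ is an $\mathcal{R}(T)/T$-approximate correlated equilibrium. Since revenue is linear in $\mu$, Theorem~\ref{t:main} then bounds the expected time-average revenue.

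Your other worry is moot. Definition~\ref{d:CE} is the swap-function form, and Lemma~\ref{l:one} derives the per-pair LP constraints from it with the same $\epsilon$ via single-point swaps. So no factor of $k+1$ is lost.
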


\noindent \textbf{Additional Results:} Our main result that provides a positive answer to Question~\ref{q:2} is formally stated and proven in Theorem \ref{t:main}. In Section \ref{s:main}, we present the key steps for establishing it. In Appendix~\ref{s:previous}, we demonstrate how the elegant argument of \cite{FLN16} that concerns \textit{exact} correlated equilibrium in \textit{continuous first-price auctions} can be easily extended to discrete first-price auctions. We also sketch how it can be extended to approximate correlated equilibrium with a quasi-polynomial bound. The latter is done only since Theorem~\ref{t:main} provides a polynomial bound. Apart from lower bounds on the expected revenue of approximate correlated equilibrium, we also present an upper bound on the revenue. In particular in Theorem~\ref{t:lower} we establish that there is a simple first-price that admits an $\epsilon$-approximate correlated equilibrium with revenue $v_2 - 1/k - \epsilon$.  In Section~\ref{s:exps}, we experimentally evaluate
the no-swap regret algorithm of \cite{BM07} in the context of first-price auctions. Our experimental evaluations concern both the case of deterministic evaluations as well as i.i.d. and suggest that the resulting dynamics converge even faster than what our analysis suggests.

\smallskip
\smallskip

\noindent \textbf{Our Techniques:}
In order to answer Question~\ref{q:2}, we adopt a completely different approach from both \cite{DKT14} and \cite{FLN16}. In particular, we adopt a \textit{dual-fitting approach} - a technique used in approximation algorithms \cite{SW11}. To the best of knowledge, our work is the first to introduce the dual-fitting technique in the context of establishing convergence rates and may be of independent interest. Our approach begins by introducing an appropriate linear program that, given an instance of a discrete first-price auction (with bidders' valuations and discretization parameter $k$), its optimal value (denoted by $Z^\star_{\mathrm{LP}}$) is a lower bound to the revenue of any $\epsilon$-approximate correlated equilibrium. Next, by exploiting the structure of the dual program, we provide a specific assignment to the dual variables. We then establish that this assignment is dual feasible and admits objective value
$v_2 - \Theta(1/k) - \Theta(\epsilon \cdot k^2)$. Thus, by weak duality we get that $Z^\star_{\mathrm{LP}} \geq v_2 - \Theta(1/k) - \Theta(\epsilon \cdot k^2)$

\subsection{Related Work}
Apart from the works of \cite{DKT14, FLN16}, our work is closely related to a recent research line on the convergence properties of no-regret algorithms in discrete first-price auctions \cite{GLMS21, KN22, DHTZ22, AB25}. Specifically, \cite{GLMS21} provide positive results on the convergence of no-regret dynamics in discrete first-price auctions, assuming that all agents undergo an initial exploration phase. \cite{KN22} demonstrate that if no-regret dynamics in first-price auctions converge, they essentially converge to the second-highest price, and they provide experimental evidence supporting this convergence. \cite{DHTZ22} establish that \textit{mean-based no-regret algorithms} asymptotically converge in discrete first-price auctions under the assumption that there are at most two bidders with the highest valuation. \cite{AB25} shows, using a strong-duality argument, that \textit{projected gradient ascent} dynamics eventually converges to a Nash equilibrium under some weak assumptions on the game dynamics. Our work distinguishes itself by considering no-swap regret bidders and provides unconditional convergence results for all no-swap regret dynamics while providing explicit convergence rates.

Closely related to our work is also \cite{BS22}, which experimentally analyze the revenue produced by Q-learning algorithms in first-price auctions. At the same time, \cite{ST13}, \cite{LST16}, and \cite{HST15} provide guarantees on the \textit{social welfare} generated by no-regret dynamics in Bayesian games, including first-price auctions. \cite{BG17} examine the convergence properties of no-regret algorithms in repeated auctions with budgets. \cite{CFS20} establish that designing no-regret algorithms for repeated combinatorial auctions is a computationally intractable task. Finally a recent line of works studies equilibria and dynamics in autobidding systems \cite{LPSSZ24,FT23,XW23,LT24}.

\section{Preliminaries and Notation}
\subsection{First-Price Auctions}
We study (discrete) first-price auctions with \( n \) bidders and a discrete set of possible bids \( \mathcal{B} = \{0, \frac{1}{k} , \dots  , 1-\frac{1}{k}, 1\} \).
Each bidder \( i \in [n] \) has a private valuation \( v_i \) and submits bid \( s_i \) which is bounded by their private valuation, specifically \( s_i  \in \mathcal{S}_i \) where \(\mathcal{S}_i = \{0 , \frac{1}{k} , \dots  , v_i \} \).
We let \( \mathcal{S} := \times_{i \in [n]} \mathcal{S}_{i} \) and then denote \( s = (s_1 , \dots , s_n) \in \mathcal{S} \) (or \( s =  (s_i, s_{-i})\)) a bidding profile. For a given bidding profile, the highest bidder \( i^{ \star } \in \arg \max _{j \in [n]} s_j \) wins but in the case of ties the winner is chosen \textit{uniformly at random} amongst the highest bidders.
This choice of tie-breaking rule motivates our notion of (expected) utility, captured in Definition~\ref{d:cost}.
\begin{definition}\label{d:cost}
	Let a bidding profile $s = (s_1,\ldots,s_n) \in \mathcal{S}$. Then, the utility of agent $i \in [n]$ is
	\[U_i(s_i,s_{-i}) := (v_i - s_i)\cdot \frac{\mathrm{I}\left[s_i = \text{max}_{j \in [n]}s_j\right]}{|\{k: s_k = \text{max}_{j \in [n]}s_j\}|}\]
\end{definition}

\subsection{No-Swap Regret Minimization}
Let a first-price auction be repeated for $T$ rounds as described in Algorithm~\ref{prot:bandit-simplex}. This means that at each round, all bidders simultaneously select their bids and the bidder with highest bid wins the item.

\begin{algorithm}[ht]
	\caption{Repeated First-Price Auction}
	\label{prot:bandit-simplex}

	\smallskip
	\smallskip

	At each round $t = 1, \ldots, T$
	\begin{itemize}
		\item Each bidder $i \in [n]$, (secretly) selects  a bid $s_i^t \in \mathcal{S}_i$.

		\item Each bidder $i \in [n]$, gets utility $U_i(s_i^t, s_{-i}^t)$.

	\end{itemize}

\end{algorithm}
In most natural cases a bidder $i \in [n]$ has no information on the valuations $v_{-i}$ of the other bidders. Thus, in order to maximize its utility, the bidder $i \in [n]$ must have a \textit{bidding strategy} that in each round $t \in [T]$ depends solely on previous bids $s_1,\ldots, s_{t-1} \in \mathcal{S}$. The online learning framework provides such adaptive decision-making algorithms that base their decision on prior observations~\cite{H17}. In the context of first-price auctions, an online learning algorithm $\mathcal{A}$, in each round $t \in [T]$ produces a mixed strategy $x_i^t \in \Delta(\mathcal{S}_i)$
that depends solely on the previous bids of the other bidders $s_{-i}^1,\ldots, s_{-i}^{t-1} \in \mathcal{S}_{-i}$.

The performance of an online learning algorithm $\mathcal{A}$ can be quantified through various notions~\cite{H17}. One of the most fundamental ones has been that of \textit{swap regret} \cite{BM07}. In particular the swap regret of $\mathcal{A}$, denoted as $\mathcal{R}^{
	\text{swap}
}_{\mathcal{A}}(T)$, equals
\begin{eqnarray}\label{eq:first}
	\mathcal{R}^\text{swap}_{\mathcal{A}}(T)&:=& \max_{
                      \delta:\mathcal{S}_i \mapsto \mathcal{S}_i}
                      \sum_{t=1}^T \mathbb{E}_{s_i\sim x_i^t}\left[ U_i(\delta(s_i),s^t_{-i}) \right] \nonumber\\
                 &-& \sum_{t=1}^T \mathbb{E}_{s_i\sim x_i^t}\left[ U_i(s_i,s^t_{-i}) \right]
\end{eqnarray}
Swap regret quantifies the difference of the overall utility produced by a no-regret algorithm with respect to the utility produced by the \textit{best switching function} $\delta(\cdot)$ that replaces action $s_i$ with the action $\delta(s_i)$.

If an online learning algorithm $\mathcal{A}$ admits sublinear swap regret $\mathcal{R}^{\text{swap}
}_{\mathcal{A}}(T) = o(T)$ then $\mathcal{A}$ is called \textit{no-swap regret}. \cite{BM07} provide a no-swap regret algorithm $\mathcal{A}$ with $\mathcal{R}^{\text{swap}
}_{\mathcal{A}}(T) = \mathcal{O}(\sqrt{|\mathcal{S}_i|  T})$ that is also the optimal possible no-swap regret.

\subsection{Correlated Equilibrium and No-Swap Dynamics}
Correlated Equilibrium (CE) is a fundamental notion introduced by Aumman~\cite{Au75}.
\begin{definition}\label{d:CE}
	\textit{A probability distribution $\mu \in \Delta(S)$ over the set of bidding profiles $\mathcal{S}:= \times_{i \in [n]}\mathcal{S}_i$, is an }$\epsilon$-\textit{approximate correlated equilibrium if
		for each bidder $i \in [n]$},
	\[ \mathbb{E}_{s \sim \mu}\left[ U_i(s_i,s_{-i})\right] \geq \max_{\delta: \mathcal{S}_i \mapsto \mathcal{S}_i}\mathbb{E}_{s \sim \mu}\left[ U_i(
      \delta(s_i),s_{-i})\right] - \epsilon.\]
\end{definition}

\noindent It is well-known that correlated equilibria describe the limiting behavior of no-swap regret algorithms~\cite{H00,FV97}.
In Theorem~\ref{t:folkore}, we state the folkore results that the limiting behavior of no-swap regret algorithms forms an approximate correlated equilibrium. For completeness its proof is presented in Appendix~\ref{a:folkore}.

\begin{restatable}{theorem}{tf}\label{t:folkore}
	If each bidder $i \in [n]$, selects its bid $s^t_i \in \mathcal{S}_i$ according to a no-swap regret algorithm $\mathcal{A}_i$,  $s^t_i \sim x_i^t$. Then after $T = \mathcal{O}\left( \max_{i \in [n]} \mathcal{R}_i(T)/\epsilon + \log(n/\delta)/\epsilon \right)$ rounds, the time-average distribution $\hat{\mu} \in \Delta(\mathcal{S})$, defined as
	\[\hat{\mu}(s) := \frac{1}{T} \sum_{t=1}^T \mathrm{I}[s_t = s] ~~~~ \text{for all }s \in \mathcal{S},\]
	is an $\epsilon$-approx. correlated equilibrium with probability $1 - \delta$. \end{restatable}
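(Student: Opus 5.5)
The plan is to bound, for each bidder separately, the gap between the realized time-average swap-deviation gap of the sampled bids and its expected counterpart, and then combine the regret guarantees with a martingale concentration bound.

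\textbf{Step 1: rewrite the equilibrium gap.} Fix a bidder $i$ and a switching function $\delta:\mathcal{S}_i\to\mathcal{S}_i$. By definition of $\hat{\mu}$,
\[
\mathbb{E}_{s\sim\hat\mu}\bigl[U_i(\delta(s_i),s_{-i})-U_i(s_i,s_{-i})\bigr]=\frac1T\sum_{t=1}^T \bigl[U_i(\delta(s_i^t),s_{-i}^t)-U_i(s_i^t,s_{-i}^t)\bigr].
\]

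\textbf{Step 2: decompose into an expected part and a noise part.} Write each summand as
\[
\mathbb{E}_{s_i\sim x_i^t}\bigl[U_i(\delta(s_i),s_{-i}^t)-U_i(s_i,s_{-i}^t)\bigr]+Z_t^{\delta}.
\]
Here $Z_t^\delta$ is the difference between the realized and the conditional expected value.

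\textbf{Step 3: control the expected part.} The sum over $t$ of the first term is at most $\mathcal{R}_i(T)$ by the swap-regret definition (\ref{eq:first}). We interpret the regret bound with respect to the realized opponent bids $s_{-i}^t$, which is fine for adaptive adversaries.

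\textbf{Step 4: control the noise part.} The sequence $Z_t^\delta$ is a martingale difference sequence with respect to the history filtration. This holds because $x_i^t$ is determined by the past and $s_i^t$ is drawn from $x_i^t$ independently of $s_{-i}^t$ conditionally on the past. Utilities lie in $[0,1]$, so $|Z_t^\delta|\le 2$. Azuma--Hoeffding then gives $\sum_t Z_t^\delta\le O(\sqrt{T\log(1/\delta')})$ with probability $1-\delta'$.

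\textbf{Step 5: union bound over switching functions — the main obstacle.} A union bound over all $\delta$ is too costly, since there are $|\mathcal{S}_i|^{|\mathcal{S}_i|}$ of them. To avoid this, decompose by the played action instead:
\[
\sum_t Z_t^\delta=\sum_{a\in\mathcal{S}_i}\sum_t Z_t^{a\to\delta(a)},
\]
where $Z_t^{a\to b}$ is the noise of $\mathrm{I}[s_i^t=a]\bigl(U_i(b,s_{-i}^t)-U_i(a,s_{-i}^t)\bigr)$. There are only $|\mathcal{S}_i|^2\le (k+1)^2$ pairs $(a,b)$, so we union bound over these pairs and over the $n$ bidders.

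A finer bound is obtained via Freedman's inequality. The variance of the $(a,b)$ term is at most $\sum_t x_i^t(a)$, and these masses sum to $T$ over $a$. Using Freedman's inequality together with Cauchy--Schwarz, the total noise is of order $\sqrt{kT\log(nk/\delta)}+k\log(nk/\delta)$.

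\textbf{Step 6: conclude.} Dividing by $T$, $\hat\mu$ is an $\epsilon$-approximate correlated equilibrium once
\[
\mathcal{R}_i(T)/T+O\Bigl(\sqrt{k\log(nk/\delta)/T}\Bigr)\le\epsilon .
\]
For a regret of order $\sqrt{kT}$ this noise term has the same order as the regret, so it fits the stated $T$ bound up to absorbing the $\log$ and $k$ factors into $\mathcal{R}_i(T)/\epsilon$.

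The stated bound $\log(n/\delta)/\epsilon$, rather than $1/\epsilon^2$, is what I expect to be the hardest point. It needs either a bound relative to $\mathcal{R}_i(T)$, which already carries a $\sqrt{T}$ rate, or an interpretation of the $\mathcal{O}$ as absorbing these terms. I would first try to derive the rate via Freedman and check which scaling the intended statement supports. If that fails, I would state the result as $T=\mathcal{O}\bigl(\mathcal{R}(T)/\epsilon+k\log(nk/\delta)/\epsilon^2\bigr)$, which suffices for Claim~\ref{claim:2}.
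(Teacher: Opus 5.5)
Your martingale argument is sound, and on the decisive point it is more careful than the paper's proof. The paper fixes $i$ and a switching function and sets $X_t:=U_i(\delta(s_i^t),s_{-i}^t)-U_i(s_i^t,s_{-i}^t)$. It treats the $X_t$ as \emph{independent} $[0,1]$-valued variables with $\sum_t\mathbb{E}[X_t]\le\mathcal{R}_i(T)$, and applies a multiplicative Chernoff bound with deviation parameter $\epsilon T/\sum_t\mathbb{E}[X_t]-1$. This gives $\Pr[\sum_t X_t\ge\epsilon T]\le e^{-\epsilon T+\mathcal{R}_i(T)}$. A union bound over the $n$ bidders and $k^k$ switching functions then yields $T=\mathcal{O}(\max_i\mathcal{R}_i(T)/\epsilon+\log(n/\delta)/\epsilon+k\log k/\epsilon)$; the statement silently drops the last term.

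That Chernoff step is the only source of the $1/\epsilon$ dependence, and it does not apply. The $x_i^t$ are adapted to the history, so the $X_t$ are not independent. The $X_t$ also take values in $[-1,1]$, and their conditional mean may be nonpositive, in which case the deviation parameter is ill-defined or negative. So no small-mean multiplicative tail is available. When the $x_i^t$ stay mixed over bids with different payoffs, your noise term has conditional variance of order one per round. Its tail at level $\epsilon T$ is then typically $e^{-\Theta(\epsilon^2 T)}$, so about $\log(1/\delta)/\epsilon^2$ rounds are needed for confidence $1-\delta$. Your suspicion is therefore justified: the $\log(n/\delta)/\epsilon$ term is not established by the paper's argument, and concentration of the realized play cannot deliver it in general.

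The real defect in your write-up is Step~6. With $\mathcal{R}_i(T)=\Theta(\sqrt{kT})$, your noise term $\sqrt{kT\log(nk/\delta)}$ exceeds the regret by a factor $\sqrt{\log(nk/\delta)}$. For small $\delta$ it can therefore be absorbed neither into $\mathcal{R}_i(T)/\epsilon$ nor into $\log(n/\delta)/\epsilon$. Drop that hedge and state the $1/\epsilon^2$ version you actually prove.

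The obstacle you identify in Step~5 is also not one. For a fixed switching function, the whole increment $Z_t^\delta$ is bounded by $2$. Azuma plus a union bound over all at most $n(k+1)^{k+1}$ pairs (bidder, switching function) therefore costs only $(k+1)\log(k+1)$ inside the logarithm. This gives
$T=\mathcal{O}\bigl(\max_i\mathcal{R}_i(T)/\epsilon+(k\log k+\log(n/\delta))/\epsilon^2\bigr)$.
That beats your pairwise Freedman bound, which multiplies $\log(n/\delta)$ by $k$. It also avoids the peeling over the random variance proxy $\sum_t x_i^t(a)$ that your Freedman step implicitly requires.

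Either version, after rescaling $\epsilon$ by $k^2$, still gives the $k^5/\epsilon^2$ rate (up to logarithmic factors) of Corollary~\ref{c:1} for the algorithm of \cite{BM07}. Neither gives the $1/\epsilon$-type rates claimed after that corollary for algorithms with $\tilde{\mathcal{O}}(n)$ swap regret.
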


\noindent Theorem \ref{t:folkore} informs us that if bidders bid according to a no-swap regret algorithm then the time-average behavior converges to an approximate correlated equilibrium with rate $\tilde{\mathcal{O}}\left(\max_i \mathcal{R}_i(T)/T \right)$.

\begin{definition}\label{d:r}
	\textit{For a bidding profile $s \in \mathcal{S}$, we denote $r(s) := \max(s_1,\ldots,s_n)$ .}
\end{definition}
\noindent Using Definition~\ref{d:r}, the expected revenue of a joint probability distribution $\mu \in \Delta(\mathcal{S})$, that equals $\mathbb{E}_{s \sim \mu}[\max(s_1,\ldots,s_n)]$, can be more concisely written as $\sum_{s \in \mathcal{S}} \mu(s) \cdot r(s)$.

\subsection{Our Results and Paper Organization}
As already mentioned, \cite{DKT14,FLN16} establish that in case of continuous first-price auctions, $\mathcal{B} = [0,1]$, the revenue of any  correlated equilibrium equals $v_2$. In Appendix~\ref{s:previous}, we show how the very elegant argument of \cite{FLN16} can be almost directly transferred to discrete first-price auctions, leading to the following theorem.
\begin{restatable}{theorem}{told}\label{t:feldman}
	Let the probability distribution $\mu \in \Delta(\mathcal{S})$ be an exact correlated equilibrium ($\epsilon = 0$). Then its
	expected revenue admits $\sum_{s \in S}\mu(s)\cdot r(s) \geq v_2 - 4/k$.   \end{restatable}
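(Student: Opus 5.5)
The plan is to adapt the Feldman--Lucier--Nisan argument for exact correlated equilibria to the discrete grid $\mathcal{B}$. Let $F(p) := \Pr_{s\sim\mu}[r(s) < p]$ be the distribution function of the winning price. Fix two bidders with the two highest valuations $v_1 \ge v_2$; they are the only ones whose deviations we use. We may assume $v_2 \ge 4/k$, since otherwise the bound is trivial. Because $\epsilon=0$, no deviation $\delta$ improves utility. We use the simplest family of swap deviations, namely single-switch deviations that map every bid below some threshold to one fixed bid $b$.

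\textbf{Step 1 (a deviation inequality for each price level).} Consider a bidder $i\in\{1,2\}$ and a bid $b\in\mathcal{B}$ with $b \le v_i - 1/k$. Use the deviation ``whenever the recommended bid $s_i$ satisfies $s_i < b$, bid $b$ instead.''

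After deviating, bidder $i$ wins, at least with the uniform tie-breaking share, whenever $r(s) < b$. The gain is therefore at least $(v_i - b)$ times the probability of profiles with $r(s) < b$. Each winner's share there is at most $1$, and the utility lost on those profiles is at most $(v_i - s_i) \le v_i$ times the share.

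The cleaner route, following \cite{FLN16}, is to sum this over both top bidders. Let $W_i(p)$ be the probability that $i$ wins with price $< p$. Equilibrium then gives
\[ (v_i - b)\,\Pr[r(s) < b] \;\le\; \mathbb{E}[U_i(s)] + (\text{ties at } b)/\cdot. \]
Ties are handled by using $b+1/k$ in place of $b$; this costs an additive $1/k$ in the price.

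\textbf{Step 2 (bounding total utility of the two bidders).} Total surplus satisfies
\[ \mathbb{E}[U_1]+\mathbb{E}[U_2] \le \mathbb{E}[(v_{\text{winner}} - r(s))\,\mathrm{I}[\text{winner}\in\{1,2\}]] \le \mathbb{E}[(v_1 - r)^+]. \]
A better bound comes from bidder 2's inequality alone combined with bidder 1's inequality. Bidder 2 can always deviate to $b$, which yields
\[ (v_2-b)F(b) \le u_2 \le \ldots \]
The FLN trick is as follows. With bids strictly below $v_2$, bidder 2's utility when winning at price $p$ is $v_2-p$. Summing the deviation inequality over $b$ against the density of prices then produces an integral inequality of the form
\[ \int (v_2 - b)\, dF(b) \;\ge\; \int_0^{b} F \]
(in discrete sum form). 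Solving this recursion, much as in Gronwall's inequality, shows that $F(b)=0$ for $b \le v_2 - c/k$.

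\textbf{Step 3 (conclusion).} From Step 2, the price is at least $v_2 - O(1/k)$ with probability one, except for mass at the top $O(1)$ grid points. This gives revenue $\ge v_2 - 4/k$ once the constants are tracked: one $1/k$ from tie-breaking, one from restricting $b\le v_2-1/k$, and two from the discrete summation by parts.

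\textbf{Main obstacle.} The hard step is Step 2: obtaining a recursion in $F$ that forces $F\equiv 0$ below $v_2 - O(1/k)$. This requires combining both bidders' conditional deviations, conditioned on their own recommended bid as swap regret allows. The first thing I would try is to condition on bidder $i$'s recommendation $s_i=a$. Showing that the conditional winning-price distribution supports the recommendation would then give, for each $a < v_2 - 2/k$, a contradiction to $\Pr[s_i = a, i \text{ wins}]>0$ for the lowest such $a$. The argument is an induction from the bottom of the support of $r(s)$ upward: the lowest price that wins with positive probability invites the other top bidder to overbid by $1/k$ profitably.
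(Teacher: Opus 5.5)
\textbf{There is a genuine gap.} Step~2 carries the whole theorem, and it is never carried out; the route you sketch for it cannot work.

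The only inequalities you actually write down are $(v_i-b)\Pr[r(s)<b]\le\mathbb{E}[U_i(s)]$ and bidder~2's ``always deviate to $b$'' bound. Both bound the deviation gain by the bidder's \emph{total} utility. They follow from the coarse correlated equilibrium constraints alone, and such equilibria can have revenue a constant factor below $v_2$ \cite{FLN16}. For instance, in the continuum limit take $v_1=v_2=1$ and let both bidders bid a common price with CDF $1/(e(1-b))$ on $[0,1-1/e]$ (an atom $1/e$ at $0$). Each bidder earns $1/e$, every unconditioned inequality you write holds, and the revenue is $1-2/e$.

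So no Gronwall-type recursion built from these inequalities can force $F\equiv 0$ below $v_2-O(1/k)$. The displayed integral inequality is neither derived nor well posed: $b$ is both the integration variable and the limit. Step~3's constant bookkeeping therefore has nothing behind it.

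Your Step~1 threshold deviation is the right one, but you discarded its conditioning. The CE condition for $\delta(a)=b$ (for $a<b$) gives $(v_i-b)\Pr[r(s)<b]\le\mathbb{E}\bigl[U_i(s)\,\mathrm{I}[s_i<b]\bigr]$. Its right-hand side only sees profiles where $i$ wins at a price below $b$. In the example above this inequality already fails for $b$ just above $0$.

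The missing idea is a single contradiction at the bottom of the price support; no induction is needed. Your last paragraph only gestures at it.

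\begin{itemize}
\item Let $p^\star$ be the least bid with $\Pr[r(s)\le p^\star]>0$. If $p^\star\ge v_2-3/k$ you are done, so take $p\in[p^\star,\tfrac{v_2+p^\star}{2}-\tfrac1k)$, e.g.\ $p=p^\star$.
\item A deviation may depend only on the bidder's own recommendation, so a bidder told $s_j\le p$ may itself be the (tied) winner. You therefore need an averaging step. The winning \emph{shares} of bidders $1$ and $2$ on $\{r(s)\le p\}$ sum to at most $\Pr[r(s)\le p]$; using shares means ties between them are not double counted. Hence some $j\in\{1,2\}$ wins at most half of this mass.
\item For this $j$, map every $s_j\le p$ to $p+1/k$. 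This is feasible since $p+1/k<v_2\le v_j$.
\item The gain is nonnegative when $r(s)>p$. It is at least $v_2-p-1/k$ on the part of $\{r(s)\le p\}$ that $j$ loses. On the part $j$ wins, the loss is at most $p+1/k-p^\star$, because there $s_j=r(s)\ge p^\star$.
\item The net gain is at least $\Pr[r(s)\le p]\bigl(\tfrac{v_2+p^\star}{2}-p-\tfrac1k\bigr)>0$, contradicting $\epsilon=0$.
\end{itemize}

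Summing the two bidders' conditioned inequalities, instead of choosing $j$, works equally well; this is the precise form of your ``sum over both top bidders.'' The argument also covers the case where the lowest price is won by a bidder $i\ge 3$, which your phrase ``the other top bidder overbids by $1/k$'' does not address.
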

\noindent In Appendix~\ref{s:previous}, we additionally demonstrate how the approach of \cite{FLN16} can be extended for $\epsilon$-approximate correlated equilibrium. In particular, the expected revenue of any $\epsilon$-approximate correlated equilibrium $\mu \in \Delta(S)$ is at least $\sum_{s \in \mathcal{S}}\mu(s) \cdot r(s) \geq v_2 - 4/k - \epsilon \cdot k^{\Theta(\log k)}.$ We remark that this is done only for completeness since the latter result is also implied by our main result (Theorem~\ref{t:main}).
\smallskip

\noindent In Section \ref{s:main}, we present the main proof of Theorem \ref{t:main} that is the main result of this work.
\begin{restatable}{theorem}{tmain}\label{t:main}
	Let the probability distribution $\mu \in \Delta(\mathcal{S})$ be an $\epsilon$-approximate correlated equilibrium. Then its
	expected revenue admits $\sum_{s \in \mathcal{S}} \mu(s) \cdot r(s) \geq v_2 - 3/k - \Theta(\epsilon \cdot k^2).$
\end{restatable}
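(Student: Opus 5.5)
We plan a dual-fitting argument, as announced in the introduction. First we reduce to $\epsilon$-approximate CE constraints that are linear in $\mu$. Since every swap function $\delta:\mathcal{S}_i\to\mathcal{S}_i$ decomposes into single-action swaps $b\mapsto\delta(b)$, it suffices to use, for each bidder $i$ and each pair $b,b'\in\mathcal{S}_i$, the constraint
\[
\sum_{s:\,s_i=b}\mu(s)\bigl(U_i(b',s_{-i})-U_i(b,s_{-i})\bigr)\le \epsilon_{i,b},
\qquad \textstyle\sum_b \epsilon_{i,b}\le\epsilon .
\]
The $\epsilon$-CE condition implies these by taking the best $\delta$ pointwise.

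Next we set up the LP. Its variables are $\mu$ and the slack budgets $\epsilon_{i,b}$. Its constraints are the linear inequalities above, for the two bidders $1,2$ with the two highest valuations only, together with $\sum_s\mu(s)=1$ and $\mu\ge 0$. Its objective is to minimize $\sum_s\mu(s)r(s)$. Any $\epsilon$-approximate CE is feasible, so the optimum $Z^\star_{\mathrm{LP}}$ lower-bounds the revenue. By weak duality it then suffices to exhibit dual multipliers $\lambda_{i,b\to b'}\ge 0$, a multiplier $\theta$ for the normalization, and a multiplier $\gamma\ge0$ for the budget. These must satisfy, for every profile $s$,
\[
\theta\le r(s)+\sum_{i\in\{1,2\}}\sum_{b'}\lambda_{i,s_i\to b'}\bigl(U_i(b',s_{-i})-U_i(s_i,s_{-i})\bigr),
\]
with $\lambda$ summed over $b'$ bounded by $\gamma$ for each $(i,b)$. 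The dual value is $\theta-\gamma\epsilon$, and the goal is $\theta\ge v_2-3/k$ and $\gamma=O(k^2)$.

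For the dual assignment we use only \emph{upward} deviations. Take $\theta=v_2-3/k$. For $i\in\{1,2\}$ and each bid $b\le v_2-2/k$, put weight on the swaps $b\mapsto b''$ with $b''>b$, restricted to bids in $\{b+1/k,\dots,v_2-2/k\}$. We look for weights of the form $\lambda_{i,b\to b''}=c/(v_2-b'')$ or simply $c\cdot k$ uniformly over $b''$. These have total mass $O(k)$ per $b$, which times $k$ levels gives $\gamma=O(k^2)$ in the worst-case accounting.

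The feasibility check splits by profile type.
\begin{itemize}
\item If $r(s)\ge v_2-3/k$, the inequality holds with all deviation gains nonnegative or small. Upward deviations by a losing bidder never hurt. For a winning bidder, a deviation to a higher bid loses at most the price increase, which the weights must be tuned to absorb.
\item If $r(s)<v_2-3/k$, at least one of bidders $1,2$ is not the sole winner at price $\ge v_2-3/k$. Deviating to $r(s)+1/k$ gains at least about $(v_2-r(s)-1/k)/2$. With weights of order $1$ on that swap this pays for the deficit $\theta-r(s)$. The difficulty is that the deviating bidder's own bid $s_i$ may be far below $r(s)$, so we need weight on every $s_i\to b''$ for all $b''$ above $s_i$. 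This is the source of the $k$ per level.
\end{itemize}

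The main obstacle is the winner's side in the second case. When bidder $i$ is winning at a low price $r(s)=s_i$, its swaps to higher bids have negative gain $-(b''-s_i)$. This must be compensated by the other top bidder's positive gain on the same profile, or by the slack $r(s)-\theta$. This is exactly where the FLN-style recursion across price levels blows up quasi-polynomially. We will first try to choose the $\lambda$'s so that for bidder $1$ and bidder $2$ they are equal at equal bids. Then on a profile where one of them wins at $b$ and the other bids $\le b$, the loser's gain $\approx(v_2-b)$ dominates the winner's loss $\approx(b''-b)\le v_2-b-2/k$, weighted identically. If the symmetric choice fails, the fallback is weights decaying like $1/(v_2-b'')$, to trade off against the $(v_2-b)$ gains. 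Either way the total dual mass stays $O(k^2)$.
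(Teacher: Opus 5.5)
\textbf{Verdict: genuine gap.} In outline your dual-fitting setup matches the paper's: an LP relaxation, weak duality, $\theta=v_2-3/k$, zero multipliers for $i\ge 3$, and multipliers only on upward swaps of bidders $1,2$ with targets capped at $v_2-2/k$. But the proposal stops where the actual proof begins. You never commit to the multipliers, and you never verify the dual constraint profile by profile. That verification is the entire content of the paper's main technical lemma ($\hat b_s\ge v_2-3/k$ for every $s$).

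\textbf{The heuristic for the critical case is wrong.} Suppose bidder $1$ wins alone at $b$ and bidder $2$ bids below $b$. The swap to $b''$ costs bidder $1$ exactly $b''-b$, but earns bidder $2$ only $v_2-b''$, not $\approx v_2-b$. Near $b''=v_2-2/k$ the loss is $\approx v_2-b$ while the gain is $2/k$, so there is no termwise domination. Feasibility comes only from an aggregate cancellation, and that cancellation needs \emph{uniform} weights. Put $\lambda=1$ on every swap $s_i\to b''$ with $s_i<b''\le v_2-2/k$, and let $D:=k(v_2-b)$. The losses are then $j/k$ and the gains $(D-j)/k$ for $j=1,\dots,D-2$. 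The net gain is $\bigl(\sum_{m=2}^{D-1}m-\sum_{j=1}^{D-2}j\bigr)/k=v_2-b-2/k$, which covers the deficit $\theta-b=v_2-b-3/k$. Admitting the target $v_2-1/k$ would make this net exactly $0$, which is why the cap matters.

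\textbf{Your fallback is infeasible.} The weights $\lambda\propto 1/(v_2-b'')$ put the most weight exactly where the loss is largest and the gain smallest. The weighted loss is $c\sum_{j=1}^{D-2}j/(D-j)=\Theta(cD\log D)$. The weighted gain is at most $c(D-3/2)$, counting the tie swap. The net is already negative at $D=10$, so the constraint fails for every $c$.

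\textbf{Remaining cases.} You also still need two further cases:
\begin{itemize}
\item Bidders $1$ and $2$ tie at the top. Use $U_1(b,s_{-1})+U_2(b,s_{-2})\le v_1-b$, which reduces this to the same sum.
\item Neither bidder $1$ nor $2$ is a top bidder. All terms are then gains, giving $\hat b_s\ge 2s^\star+k(v_2-s^\star)^2-v_2-2/k\ge v_2-3/k$.
\end{itemize}

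\textbf{Where your plan improves on the paper.} Your per-action budgets $\epsilon_{i,b}$ with $\sum_b\epsilon_{i,b}\le\epsilon$ and $\epsilon_{i,b}\ge0$ are a valid relaxation. The swap gain of any $\delta$ decomposes as $\sum_b\max_{b'}\sum_{s:s_i=b}\mu(s)\bigl(U_i(b',s_{-i})-U_i(b,s_{-i})\bigr)$, and each summand is nonnegative. This is sharper than the paper's LP, which charges a separate $\epsilon$ to every pair $(s_i,s_i')$ and therefore pays $\epsilon$ times the total dual mass, about $k^2$. In your dual, $\gamma_i$ only has to dominate $\max_b\sum_{b'}\lambda_{i,b\to b'}$, which is at most $k$ for unit weights. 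So your accounting ``$O(k)$ per $b$ times $k$ levels'' is mistaken. Completed with the unit-weight assignment above, your relaxation would give revenue at least $v_2-3/k-2k\epsilon$.
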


\noindent Combining Theorem \ref{t:main} with Theorem~\ref{t:folkore} we get the following corollary on the time-average payoff of a first-price auction once all bidders use no-swap regret algorithms.

\begin{corollary}\label{c:1}
	If each bidder $i \in [n]$, selects its bid $s^t_i \in \mathcal{S}_i$ according to a no-swap regret algorithm $\mathcal{A}_i$,  $s^t_i \sim x_i^t$. Then after $T = \mathcal{O}\left(k^2 \max_{i \in [n]} \mathcal{R}_i(T)/\epsilon + k^2\log(n/\delta)/\epsilon \right)$ rounds, the time-average revenue admits
	\[\frac{1}{T} \sum_{t=1}^Tr(s_t) \geq v_2 - \frac{3}{k} - \mathcal{O}(\epsilon)\]
	with probability $1 - \delta$. If all agents use the algorithm of \cite{BM07} with swap regret
	$\mathcal{R}^\mathrm{swap}_i(T):= \mathcal{O}(\sqrt{kT})$ then at most $T = \mathcal{O}\left(k^5 \cdot \frac{\log(n/\delta)}{\epsilon^2}\right)$ rounds are required.
\end{corollary}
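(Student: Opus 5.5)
The plan is to apply Theorem~\ref{t:folkore} with a rescaled accuracy parameter and then invoke Theorem~\ref{t:main} on the resulting empirical distribution. The first step is to observe that the time-average revenue is exactly the expected revenue of the empirical distribution $\hat{\mu}$ of Theorem~\ref{t:folkore}. Since $\hat{\mu}(s) = \frac{1}{T}\sum_{t=1}^T \mathrm{I}[s_t = s]$, we have
\[
\frac{1}{T}\sum_{t=1}^T r(s_t) \;=\; \sum_{s\in\mathcal{S}} \hat{\mu}(s)\, r(s).
\]
So it suffices to lower bound the right-hand side with probability $1-\delta$.

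Set $\epsilon' := \epsilon/k^2$ and apply Theorem~\ref{t:folkore} with accuracy $\epsilon'$. After
\[
T = \mathcal{O}\!\left(\max_i \mathcal{R}_i(T)/\epsilon' + \log(n/\delta)/\epsilon'\right) = \mathcal{O}\!\left(k^2\max_i \mathcal{R}_i(T)/\epsilon + k^2\log(n/\delta)/\epsilon\right)
\]
rounds, $\hat{\mu}$ is an $\epsilon'$-approximate correlated equilibrium with probability $1-\delta$. On that event, Theorem~\ref{t:main} gives $\sum_s \hat{\mu}(s) r(s) \ge v_2 - 3/k - \Theta(\epsilon' k^2) = v_2 - 3/k - \mathcal{O}(\epsilon)$. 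This proves the first part.

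For the specialization to \cite{BM07}, note that $|\mathcal{S}_i| \le k+1$, so $\mathcal{R}_i(T) = \mathcal{O}(\sqrt{kT})$. The one mildly delicate point, and the only real obstacle, is that the round requirement is implicit in $T$, because $\mathcal{R}_i(T)$ itself grows with $T$. I would resolve it by requiring $T \ge C\,k^2\sqrt{kT}/\epsilon$ for a suitable constant $C$. This is equivalent to $\sqrt{T} \ge C k^{5/2}/\epsilon$, that is, $T \ge C^2 k^5/\epsilon^2$. The remaining requirement $T \ge C\,k^2\log(n/\delta)/\epsilon$ is dominated by $k^5\log(n/\delta)/\epsilon^2$, since $k \ge 1$ and we may assume $\epsilon \le 1$.

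Hence the condition of the previous paragraph holds once $T = \mathcal{O}\big(k^5 \log(n/\delta)/\epsilon^2\big)$, which yields the claimed bound. If $\log(n/\delta)$ is smaller than a constant, it can be replaced by $\max\{1,\log(n/\delta)\}$ without changing the asymptotics.
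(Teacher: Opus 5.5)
Your proposal is correct and is essentially the argument the paper intends: the paper gives no separate proof, only the remark that the corollary follows by combining Theorem~\ref{t:main} with Theorem~\ref{t:folkore}. You identify the time-average revenue with $\sum_s \hat{\mu}(s) r(s)$, run Theorem~\ref{t:folkore} at accuracy $\epsilon/k^2$, and then invoke Theorem~\ref{t:main}, and your explicit resolution of the implicit $T$-dependence via $T \gtrsim k^{2}\sqrt{kT}/\epsilon$ for the \cite{BM07} case is more careful than the paper's one-line statement.
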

\noindent We remark that the upper bound of $T = \tilde{\mathcal{O}}\left(k^5 /\epsilon^2\right)$ represents a significant improvement over the previous bound of $T = k^{\mathcal{O}(\log k)} /\epsilon^2$. Furthermore, if all bidders adopt specific no-swap regret algorithms, such as those proposed in \cite{AFKL22, ADFF22}, which guarantee no-swap regret of $\tilde{\mathcal{O}}(n)$ and $\tilde{\mathcal{O}}(n k^4)$, respectively, then the required number of rounds reduces to $\tilde{\mathcal{O}}(k^2 n/\epsilon)$ and $\tilde{\mathcal{O}}(k^6 n/\epsilon)$, respectively.

We finally provide an upper bound on the revenue of any $\epsilon$-approximate correlated equilibrium, establishing that the $\mathcal{O}(\epsilon)$-dependence on  cannot be alleviated from the revenue guarantees of an $\epsilon$-approximate correlated equilibrium. The latter is formally stated and proven in Theorem~\ref{t:lower}.

\begin{restatable}{theorem}{tlower}\label{t:lower}
	Let the discrete-price auctions with $2$ bidders with $v_1 = v_2 = 1$. There exists an $\epsilon$-approximate correlated equilibrium $\mu \in \Delta(\mathcal{S})$ such that the expected payoff admits
	\[\sum_{s \in \mathcal{S}} \mu(s) \cdot r(s) \leq 1 - \Theta(1/k) - \Theta(\epsilon)\]
\end{restatable}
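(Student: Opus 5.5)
We prove the statement with an explicit construction. We take $\mu$ to be a mixture of two symmetric pure profiles: an exact equilibrium whose revenue is already $1-\Theta(1/k)$, and a small dose of a zero-revenue profile. The $\epsilon$ slack in Definition~\ref{d:CE} is exactly what allows the second component.

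Concretely, assume $k \geq 4$ and $\epsilon \leq 1$. Define
\[\mu\big((1-\tfrac{2}{k},\,1-\tfrac{2}{k})\big) = 1-\epsilon, \qquad \mu\big((0,0)\big) = \epsilon.\]
The revenue is immediate:
\[\sum_{s}\mu(s)\,r(s) = (1-\epsilon)\Big(1-\tfrac{2}{k}\Big) = 1 - \tfrac{2}{k} - \epsilon\Big(1-\tfrac{2}{k}\Big) \leq 1 - \tfrac{2}{k} - \tfrac{\epsilon}{2}.\]
This has the required form $1-\Theta(1/k)-\Theta(\epsilon)$.

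For the equilibrium condition we use that the swap-deviation gain decomposes by recommended action. For any $\delta:\mathcal{S}_i\to\mathcal{S}_i$, the gain $\mathbb{E}_{\mu}[U_i(\delta(s_i),s_{-i})-U_i(s_i,s_{-i})]$ equals a sum over the recommended bids $b$ in the support of $\mu$. Each term is $\mu(s_i=b)$ times the conditional gain of switching from $b$ to $\delta(b)$. Since $\mu$ is perfectly correlated, recommending $b$ to bidder $i$ reveals that the opponent also bids $b$. It therefore suffices to bound the best-response gain in each of the two pure profiles, with $v_i = 1$.

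\textbf{Profile $(1-2/k,\,1-2/k)$.} The current utility is $\tfrac12\cdot\tfrac{2}{k}=\tfrac1k$. Each candidate deviation $b'$ gives:
\begin{itemize}
\item $b' < 1-2/k$ loses, giving utility $0$;
\item $b' = 1-1/k$ wins outright, giving utility $1/k$;
\item $b' = 1$ gives utility $0$;
\item any higher bid is not allowed.
\end{itemize}
So no deviation gains, and this block contributes $0$.

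\textbf{Profile $(0,0)$.} The current utility is $1/2$. The best deviation is $b' = 1/k$, which wins with utility $1-1/k$. The conditional gain is therefore $\tfrac12-\tfrac1k \leq \tfrac12$, and this block contributes at most $\epsilon/2 \leq \epsilon$.

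Summing the two blocks, every swap function gains at most $\epsilon$ for each bidder, so $\mu$ is an $\epsilon$-approximate correlated equilibrium.

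The only delicate step is this decomposition of the swap-deviation gain and the check that the high-bid profile is an exact (weak) equilibrium. The discreteness matters here: bidding $1-1/k$ against $1-2/k$ yields exactly the same utility $1/k$ as tying at $1-2/k$, which is why we use $1-2/k$ rather than the trivial tie at $1-1/k$. That choice secures the $\Theta(1/k)$ term. The $\Theta(\epsilon)$ term then comes from choosing the weight $\epsilon$ on the bad profile, which is the largest weight the tolerance in Definition~\ref{d:CE} permits up to constants.
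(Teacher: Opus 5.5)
Your proof is correct. It uses the same template as the paper: a perfectly correlated distribution on diagonal profiles, with mass $1-\epsilon$ on a tie that is an exact equilibrium and total mass $\epsilon$ on low ties. In both, the swap gain is checked by splitting it according to the recommended bid, which reveals the opponent's bid.

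The difference is in which profiles are chosen, and it matters. The paper puts $1-\epsilon$ on $(1,1)$ and $\epsilon/k$ on each $(j/k,j/k)$, $j=0,\dots,k-1$. Its revenue is $1-\epsilon+\frac{\epsilon(k-1)}{2k}=1-\frac{\epsilon}{2}-\frac{\epsilon}{2k}$. This tends to $1$ as $\epsilon\to 0$, so that construction only witnesses the $\Theta(\epsilon)$ loss, not an $\epsilon$-independent $\Theta(1/k)$ loss.

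Your bulk on $(1-2/k,1-2/k)$ gives revenue $(1-\epsilon)(1-2/k)\le 1-2/k-\epsilon/2$. This matches both terms of the statement, and also the introduction's description of the result as $v_2-1/k-\epsilon$. Concentrating the $\epsilon$ mass on $(0,0)$ also has two smaller benefits. The deviation check reduces to a single best response, with gain $\epsilon(1/2-1/k)$. The $\epsilon$-coefficient becomes $1-2/k$ instead of the paper's $1/2+1/(2k)$.

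One small correction to your closing remark. The tie $(1-1/k,1-1/k)$ is also an exact equilibrium: its tie utility is $1/(2k)$, while bidding $1$ yields $0$ and lower bids lose. So it would equally give a $\Theta(1/k)$ term. Choosing $1-2/k$ only improves the constant. At $(1-m/k,1-m/k)$ the tie is an equilibrium iff $(m-1)/k\le m/(2k)$, i.e. $m\le 2$, so $1-2/k$ is the lowest symmetric pure tie that works.
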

\noindent The proof of Theorem~\ref{t:lower} is presented in Appendix \ref{a:lower}.

\section{Proof of Theorem \ref{t:main}}\label{s:main}
In this section we present the proof of Theorem \ref{t:main} that is the main contribution of this work.

\tmain*
\noindent In order to establish Theorem \ref{t:main} we develop a novel dual-fitting approach. In Definition \ref{d:LP} we introduce a linear program the minimum value of which acts as a lower bound on the revenue of any $\epsilon$-approximate correlated equilibrium.

\begin{definition}\label{d:LP}
	\textit{Let a discrete first-price auction and consider the following linear program,}
	\begin{equation*}
		\begin{aligned}
			\text{min}  & \sum_{s \in \mathcal{S}} \mu(s) \cdot r(s)                                                                          \\
			\text{s.t.} & \sum_{s_{-i} \in \mathcal{S}_{-i}} \mu(s_i,s_{-i})\left( U_i(s'_i,s_{-i}) - U_i(s_i,s_{-i} ) \right) \leq \epsilon~ \\
             & ~~~~~\text{\textit{for all agents }} i\in [n] \text{\textit{ and bids }} s_i,s'_i \in \mathcal{S}_i                 \\
             & \sum_{s \in \mathcal{S}} \mu(s)  = 1                                                                                \\
             & \mu(s) \geq 0 ~~~~~\forall s \in \mathcal{S}
		\end{aligned}
	\end{equation*}
	\textit{where $U_i(s_i,s_{-i})$ is the utility of agent $i \in [n]$. We denote with $Z_{LP}^
		\star$ the optimal value of the linear program.}
\end{definition}
\noindent Notice that any feasible solution $\mu(\cdot)$ of the polytope described in Definition~\ref{d:LP} is also a joint probability distribution over $\mathcal{S}$, $\mu \in \Delta(S)$. In Lemma~\ref{l:1} we establish that the optimal value $Z_{LP}^\star$ of the linear program in Definition \ref{d:LP} acts as a lower bound to the revenue of any $\epsilon$-approximate correlated equilibrium $\mu \in \Delta(\mathcal{S})$.

\begin{restatable}{lemma}{lone}\label{l:one}
	Let $\mu \in \Delta(\mathcal{S})$ be an $\epsilon$-approximate correlated equilibrium. Then $ \sum_{s \in \mathcal{S}} \mu(s) \cdot r(s) \geq Z^\star_{LP}.$
\end{restatable}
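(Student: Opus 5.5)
The plan is to show that any $\epsilon$-approximate correlated equilibrium $\mu$ is a feasible point of the linear program in Definition~\ref{d:LP}. Since the LP minimizes the revenue $\sum_{s}\mu(s)\, r(s)$ over its feasible region, its optimal value then cannot exceed the revenue of $\mu$, i.e.\ $Z^\star_{LP} \le \sum_{s}\mu(s)\,r(s)$. The normalization constraint $\sum_s \mu(s) = 1$ and the nonnegativity constraints hold because $\mu \in \Delta(\mathcal{S})$. So the only real work is the incentive constraints.

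\textbf{Step 1: the incentive constraints.} Fix an agent $i \in [n]$ and two bids $s_i, s'_i \in \mathcal{S}_i$. I will define the switching function $\delta:\mathcal{S}_i \to \mathcal{S}_i$ by $\delta(s_i) = s'_i$ and $\delta(b) = b$ for every $b \neq s_i$. Applying Definition~\ref{d:CE} with this $\delta$ gives
\[
\mathbb{E}_{s \sim \mu}\left[U_i(\delta(s_i), s_{-i}) - U_i(s_i, s_{-i})\right] \le \epsilon .
\]

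\textbf{Step 2: reduce the expectation to the LP sum.} I will expand the expectation above as a sum over profiles $(b, s_{-i})$ with $b \in \mathcal{S}_i$ and $s_{-i} \in \mathcal{S}_{-i}$. Every term with $b \neq s_i$ vanishes, because $\delta(b) = b$ there. What remains is exactly
\[
\sum_{s_{-i}\in\mathcal{S}_{-i}} \mu(s_i, s_{-i})\left(U_i(s'_i, s_{-i}) - U_i(s_i, s_{-i})\right),
\]
which is the left-hand side of the LP constraint for the triple $(i, s_i, s'_i)$. Hence that constraint holds. Since $i$, $s_i$ and $s'_i$ were arbitrary, $\mu$ is feasible.

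\textbf{Step 3: the optimum is well defined.} The feasible region is a closed subset of the simplex, hence compact. It is nonempty, since it contains $\mu$. The objective is linear, so the minimum $Z^\star_{LP}$ is attained. Therefore $Z^\star_{LP} \le \sum_s \mu(s)\, r(s)$.

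None of these steps is a real obstacle. The only point needing care is Step 2: checking that the "single-swap" deviation $\delta$ isolates exactly one LP constraint. In fact the approximate CE condition, which must hold for every $\delta$, is stronger than the family of pairwise constraints, and feasibility only uses this weaker family.
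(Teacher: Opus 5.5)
Your proof is correct and follows the same route as the paper: you show $\mu$ is LP-feasible by applying the $\epsilon$-CE condition to the single-swap deviation $\delta(s_i)=s'_i$, $\delta(x)=x$ otherwise, which isolates exactly one pairwise constraint. Your Step 3 (compactness, so the minimum is attained) is harmless but not needed, since feasibility alone already gives $Z^\star_{LP}\le\sum_s\mu(s)\,r(s)$ even when $Z^\star_{LP}$ is read as an infimum.
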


\noindent The proof of Lemma~\ref{l:one} consists of using Definition~\ref{d:CE} to show that any $\epsilon$-approximate correlated equilibrium satisfies the constraints of the linear program of Definition \ref{d:LP} and is relative straightforward (see Appendix~\ref{app:1} for the proof). We dedicate the rest of the section to establish a lower bound on $Z_{LP}^\star$. In Lemma~\ref{l:duality} we present the dual of the linear program of Definition \ref{d:LP} (see Appendix~\ref{app:1} for the proof).

\begin{restatable}{lemma}{dual}\label{l:duality}
	The dual of the LP in Definition \ref{d:LP} is the following:
	\begin{equation*}
		\begin{aligned}
			\text{max} \quad  & \mu -  \epsilon \cdot \sum_{i\in [n]} \sum_{s_i \in \mathcal{S}_i} \sum_{s_i' \in \mathcal{S}_i} \lambda^i_{s_i s_i'}             \\
			\text{s.t.} \quad & \mu + \sum_{i\in [n]} \sum_{s_i' \in \mathcal{S}_i}\lambda_{s_is'_i}^i \left(U_i(s_i,s_{-i}) - U_i(s'_i,s_{-i}) \right) \leq r(s) \\
                   & \text{ for all } s \in \mathcal{S}                                                                                                \\
                   & \lambda_{s_i s_{-i}}^i \geq 0~~~~~  \forall i \in [n], s_i,s_{-i} \in \mathcal{S}_i
		\end{aligned}
	\end{equation*}
\end{restatable}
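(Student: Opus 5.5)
This is a routine LP-duality computation: write the primal in standard form, attach one multiplier per constraint, and read off the dual constraints by collecting the coefficient of each primal variable.

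\textbf{Multipliers.} I will use the standard recipe for a minimization primal with nonnegative variables.
\begin{itemize}
\item To each incentive constraint, indexed by agent $i\in[n]$ and ordered pair $(s_i,s_i')\in\mathcal{S}_i\times\mathcal{S}_i$, attach a multiplier $\lambda^i_{s_is_i'}\ge 0$.
\item Rewrite each incentive constraint in ``$\ge$'' form as
\[\sum_{s_{-i}}\mu(s_i,s_{-i})\bigl(U_i(s_i,s_{-i})-U_i(s_i',s_{-i})\bigr)\ \ge\ -\epsilon.\]
This makes the multiplier sign convention match a minimization problem.
\item To the normalization equality $\sum_s\mu(s)=1$, attach a free multiplier, which I call $\mu$ to match the statement (a scalar, distinct from the primal variables $\mu(s)$).
\end{itemize}

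\textbf{Lagrangian and weak duality.} For any feasible primal $\mu(\cdot)$ and any such multipliers,
\[\sum_s \mu(s)r(s)\ \ge\ \sum_s\mu(s)r(s)-\sum_{i,s_i,s_i'}\lambda^i_{s_is_i'}\Bigl(\sum_{s_{-i}}\mu(s_i,s_{-i})\bigl(U_i(s_i,s_{-i})-U_i(s_i',s_{-i})\bigr)+\epsilon\Bigr)-\mu\Bigl(\sum_s\mu(s)-1\Bigr).\]
The inequality holds because each subtracted incentive term is nonnegative (feasibility times $\lambda\ge 0$) and the normalization term vanishes.

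\textbf{Coefficient of each $\mu(s)$.} Fix a profile $s=(s_1,\dots,s_n)$. The variable $\mu(s)$ appears in agent $i$'s constraint $(s_i,s_i')$ exactly when the first index equals $s$'s $i$-th coordinate. So its coefficient on the right-hand side is
\[r(s)-\mu-\sum_{i}\sum_{s_i'\in\mathcal{S}_i}\lambda^i_{s_is_i'}\bigl(U_i(s_i,s_{-i})-U_i(s_i',s_{-i})\bigr).\]
The constant term is $\mu-\epsilon\sum_{i,s_i,s_i'}\lambda^i_{s_is_i'}$.

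\textbf{Dual constraints and objective.} Since $\mu(s)\ge0$, the bound stays valid when we drop these terms, provided every coefficient is nonnegative. Requiring this for every $s$ gives exactly the dual constraint
\[\mu+\sum_i\sum_{s_i'}\lambda^i_{s_is_i'}\bigl(U_i(s_i,s_{-i})-U_i(s_i',s_{-i})\bigr)\le r(s).\]
The remaining constant, $\mu-\epsilon\sum\lambda$, is the dual objective, and maximizing it gives the stated program.

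\textbf{Strong duality.} Strong duality holds because the primal is feasible: any exact correlated equilibrium, such as a pure Nash equilibrium of the discretized game, or the existence guaranteed in general, satisfies the constraints. The primal is also bounded, since $r\ge 0$. The whole argument is routine; the only point needing care is the indexing, i.e.\ that $\mu(s)$ enters only the constraints whose first index matches $s_i$. The statement's subscript $\lambda^i_{s_is_{-i}}\ge0$ is presumably a typo for $\lambda^i_{s_is_i'}\ge 0$.
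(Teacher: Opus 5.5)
Your derivation is correct and is essentially the paper's own argument: the paper also forms the Lagrangian with multipliers $\lambda^i_{s_is_i'}\ge 0$ on the incentive constraints and a free scalar $\mu$ on the normalization, then reads off each dual constraint from the coefficient of $\mu(s)$; where you drop the nonnegative terms, it introduces an explicit slack multiplier $k_s\ge 0$ for $\mu(s)\ge 0$. Your strong-duality aside is correct but unnecessary, since the paper only uses weak duality, and you are right that $\lambda^i_{s_is_{-i}}$ in the statement should read $\lambda^i_{s_is_i'}$.
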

\begin{remark}
	We remark that the dual variable $\mu \in \mathbb{R}$ in Lemma~\ref{l:duality} is a scalar and should not be confused with $\mu(\cdot) \in \Delta(\mathcal{S})$ in the primal linear program of Definition~\ref{d:LP}.
\end{remark}
\noindent Our dual fitting approach consists of finding an assignment to the dual variables $\{
\mu,\lambda^i_{s_is'_i}\}$ that  is feasible for the dual and satisfies
\[\mu -  \epsilon \cdot \sum_{i\in [n]} \sum_{s_i \in \mathcal{S}_i} \sum_{s_i' \in \mathcal{S}_i} \lambda^i_{s_i s_i'}
	\geq v_2 - 3/k - \Theta(k^2 \epsilon).\]
Once the latter is established, the proof of Lemma \ref{l:main} follows directly by weak duality. In particular,
\begin{eqnarray*}
	\sum_{s} \mu(s) \cdot r(s) &\geq& Z_{LP}^\star \geq \mu -  \epsilon \cdot \sum_{i\in [n]} \sum_{s_i \in \mathcal{S}_i} \sum_{s_i' \in \mathcal{S}_i} \lambda^i_{s_i s_i'}\\
                    &\geq& v_2 - 3/k - \Theta(k^2 \epsilon).
\end{eqnarray*}
\subsection{The Dual-Fitting Argument}\label{s:dual_fit}
In this section, we present our assignment of the dual variables $\{\mu, \lambda^i_{s_i,s_{-i}}\}$. We use the notation $\hat{\lambda}^i_{s_i,s_{-i}}$ to differentiate between the assignment and the variables $\lambda^i_{s_i,s_{-i}}$. We also remind that without loss of generality we have assumed that $v_1\geq v_2 \geq \ldots \geq v_n$. In Definition~\ref{d:1} we present the assignment of the dual variables corresponding to any bidder $i \geq 3$.
\begin{definition}\label{d:1}
	For each agent $i\geq 3$,  $\hat{\lambda}^i_{s_is'_{i}}:=0$.
\end{definition}\label{def:1}

\noindent In Definition \ref{d:2} we present the assignment of the dual variables
$\hat{\lambda}^i_{s_is'_{i}}$ for the bidders $i \in \{1,2\}$.
\begin{definition}\label{d:2}
	Let a bidder $i \in \{1,2\}$. Then for all bids $s_i \in \mathcal{S}_i$,
	\[
		\hat{\lambda}^i_{s_is'_{i}} :=
		\begin{cases}
			0 & \text{if } s_i' \leq s_i                     \\
			\smallskip
			1 & \text{if } s_i +1/k \leq s_i' \leq v_2 - 2/k \\
			\smallskip
			0 & \text{if }  v_2 - 1/k \leq s_i'              \\
		\end{cases}
	\]
\end{definition}

\noindent To simplify notation we consider the following notation,
\[\hat{b}_s := r(s) - \sum_{i\in [n]} \sum_{s_i' \in \mathcal{S}_i}\hat{\lambda}_{s_is'_i}^i \cdot \left(U_i(s_i,s_{-i}) - U_i(s'_i,s_{-i}) \right).\]

\noindent In Lemma~\ref{l:main} we state the main technical result of the section, stating a lower bound on each $\hat{b}_s$.

\begin{lemma}\label{l:main}
	Let the assignment $\hat{\lambda}^i_{s_is'_i}$ described in Definition \ref{d:1} and \ref{d:2} respectively. Then for all bidding profiles $s \in \mathcal{S}$,
	$\hat{b}_s \geq 1 - 3/k$.
\end{lemma}
\noindent The proof of Lemma~\ref{l:main} is presented at the end of this section. Up next, we state and prove Lemma~\ref{l:main2} that directly implies Theorem \ref{t:main}.

\begin{lemma}\label{l:main2}
	Let the assignment $\hat{\lambda}^i_{s_is'_i}$ of Definition \ref{d:1},\ref{d:2} and $\hat{\mu} := v_2 - 3/k$. This assignment is dual feasible and additionally
	\[\hat{\mu} -  \epsilon \cdot \sum_{i\in [n]} \sum_{s_i \in \mathcal{S}_i} \sum_{s_i' \in \mathcal{S}_i} \hat{\lambda}^i_{s_i s_i'}
		\geq v_2 - 3/k - \Theta(k^2 \epsilon).\]
\end{lemma}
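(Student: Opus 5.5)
The plan has two parts: dual feasibility, and a count of the nonzero dual variables. Both follow from Lemma~\ref{l:main} and the explicit form of Definitions~\ref{d:1} and~\ref{d:2}.

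\emph{Feasibility.} The variables $\hat{\lambda}^i_{s_is'_i}$ lie in $\{0,1\}$ by Definitions~\ref{d:1} and~\ref{d:2}, so nonnegativity is immediate. The dual constraint for a profile $s\in\mathcal{S}$ reads
\[
\hat{\mu} + \sum_{i}\sum_{s'_i}\hat{\lambda}^i_{s_is'_i}\bigl(U_i(s_i,s_{-i})-U_i(s'_i,s_{-i})\bigr)\le r(s),
\]
which is exactly $\hat{\mu}\le \hat{b}_s$. Lemma~\ref{l:main} gives a lower bound on every $\hat{b}_s$. The statement there reads $1-3/k$, which I take to mean $v_2-3/k$, since the assignment is built around $v_2$. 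In the normalized reading $v_2=1$ the two coincide; otherwise I will invoke Lemma~\ref{l:main} in the form $\hat{b}_s\ge v_2-3/k$. Either way this yields $\hat{b}_s\ge \hat{\mu}=v_2-3/k$ for every $s$, so the assignment is dual feasible.

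\emph{Objective value.} The objective is $\hat{\mu}-\epsilon\sum_i\sum_{s_i,s'_i}\hat{\lambda}^i_{s_is'_i}$. By Definition~\ref{d:1} only bidders $i\in\{1,2\}$ contribute. For such $i$ and a fixed $s_i$, the nonzero entries are those $s'_i$ with $s_i+1/k\le s'_i\le v_2-2/k$, and each equals $1$. There are at most $k+1$ such $s'_i$, since $s'_i\in\mathcal{B}$. There are at most $|\mathcal{S}_i|\le k+1$ choices of $s_i$. Hence
\[
\sum_{i}\sum_{s_i,s'_i}\hat{\lambda}^i_{s_is'_i}\le 2(k+1)^2 = O(k^2).
\]
More precisely the count is at most $2\binom{k v_2}{2}\le k^2$, but the cruder bound suffices. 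Therefore the objective is at least $v_2-3/k-2(k+1)^2\epsilon = v_2-3/k-\Theta(k^2\epsilon)$, as claimed.

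Given Lemma~\ref{l:main}, both steps are routine. The only thing to check carefully is that the constant in Lemma~\ref{l:main} is $v_2-3/k$ and not the literal $1-3/k$; I will use the $v_2$ version. The real difficulty lies in Lemma~\ref{l:main} itself, which is assumed here.
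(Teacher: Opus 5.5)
Your proposal is correct and matches the paper's proof: dual feasibility is exactly $\hat{b}_s \ge \hat{\mu}$, which Lemma~\ref{l:main} supplies, and the objective bound comes from counting the $O(k^2)$ unit entries of $\hat{\lambda}$ for bidders $1$ and $2$. You are right to read Lemma~\ref{l:main} with $v_2 - 3/k$ in place of the typo $1-3/k$, and your explicit nonnegativity check and cruder $2(k+1)^2$ count are harmless variations on the paper's $2k^2$.
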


\begin{proof}[Proof of Lemma \ref{l:main2}]
	Since $\hat{\mu} = v_2 - 3/k$ then by Lemma \ref{l:main} we directly get that for all $s \in \mathcal{S}$,
	\[\hat{\mu} + \sum_{i\in [n]} \sum_{s_i' \in \mathcal{S}_i}
		\hat{\lambda}_{s_is'_i}^i \cdot \left(U_i(s_i,s_{-i}) - U_i(s'_i,s_{-i}) \right) \leq r(s)\]
	which means that the assignment is dual feasible. By Definition \ref{d:1} and \ref{d:2} we get that $\hat{\mu} -  \epsilon  \sum_{i\in [n]} \sum_{s_i,s'_i \in \mathcal{S}_i}
  \hat{\lambda}^i_{s_i s_i'}:=$
	\begin{eqnarray*}
		v_2 - \frac{3}{k} -\epsilon  \sum_{i=1}^2 \sum_{s_i \in \mathcal{S}_i} \sum_{s_i' = s_i + 1/k }^{v_2 - 2/k} 1 \geq v_2 - \frac{3}{k} -2\epsilon
		\cdot k^2
	\end{eqnarray*}
\end{proof}
\noindent We conclude the section with the proof of Lemma~\ref{l:main}.
\begin{proof}[Proof of Lemma~\ref{l:main}] By Definition \ref{d:1} we know that for all bidders $i \geq 3$, $\hat{\lambda}^i_{s_is'_{i}}=0$. Thus for all $s \in \mathcal{S}$,
	\[\hat{b}_s := r(s) - \sum_{i\in \{1,2\}} \sum_{s_i' \in \mathcal{S}_i}\hat{\lambda}_{s_is'_i}^i \left(U_i(s_i,s_{-i}) - U_i(s'_i,s_{-i}) \right)\]
	Notice that by Definition~\ref{d:2} we get that
	\begin{equation}\label{eq:16}
		\hat{b}_s := r(s) - \sum_{i\in \{1,2\}} \sum_{s'_i=s_i + 1/k}^{v_
			2 - 2/k} \left(U_i(s_i,s_{-i}) - U_i(s'_i,s_{-i}) \right).
	\end{equation}
	We complete the proof by partitioning $\mathcal{S}$ in the following $4$ classes and establishing that $\hat{b}_s \geq 1 - 3/k$ for each one of them separately.
	\begin{restatable}{lemma}{la}\label{l:1}
		Let a bidding profile $s \in \mathcal{S}$ such that $s_1 = \max(s_1,\ldots,s_n)$ and $s_2 < s_1$ then $b_s \geq v_2 - 2/k$.
	\end{restatable}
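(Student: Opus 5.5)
The plan is to compute $\hat{b}_s$ directly from \eqref{eq:16}, using that $\hat{b}_s = r(s) + \sum_{i\in\{1,2\}}\sum_{s'_i=s_i+1/k}^{v_2-2/k}\bigl(U_i(s'_i,s_{-i})-U_i(s_i,s_{-i})\bigr)$, with $r(s)=s_1$. The sums over $s'_i$ run over bids in $\mathcal{S}_1$ and $\mathcal{S}_2$, since $v_2-2/k\le v_2\le v_1$. I will treat the contribution of bidder $1$ and bidder $2$ separately. The negative contribution of bidder $1$ should be cancelled term by term by the positive contribution of bidder $2$. I split into two cases according to whether $s_1\le v_2-2/k$ or $s_1\ge v_2-1/k$.

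\textbf{Easy case $s_1\ge v_2-1/k$.} Bidder $1$'s inner sum is empty. For bidder $2$, every $s'_2$ in the range satisfies $s'_2\le v_2-2/k<s_1$, so bidder $2$ loses under both $s_2$ and $s'_2$. Hence $U_2(s'_2,s_{-2})=U_2(s_2,s_{-2})=0$. Therefore $\hat b_s=s_1\ge v_2-1/k\ge v_2-2/k$.

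\textbf{Main case $s_1\le v_2-2/k$: bidder $1$.} For any $s'_1>s_1$, bidder $1$ wins alone, so $U_1(s'_1,s_{-1})=v_1-s'_1$. We have $U_1(s_1,s_{-1})=(v_1-s_1)/m$, where $m\ge 1$ counts the bidders tied at the top; bidder $2$ is not among them, but bidders $i\ge 3$ may be. Since $v_1-s_1\ge0$, each bidder-$1$ term is at least $(v_1-s'_1)-(v_1-s_1)=-(s'_1-s_1)$. Writing $s'_1=s_1+j/k$ with $j=1,\dots,L$ and $L:=k(v_2-2/k-s_1)$, bidder $1$ contributes at least $-\sum_{j=1}^{L} j/k$.

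\textbf{Main case: bidder $2$.} Here $U_2(s_2,s_{-2})=0$, because $s_2<s_1$. Every term with $s'_2\le s_1$ is nonnegative. This includes the possible tie at $s'_2=s_1$, worth $(v_2-s_1)/m'\ge 0$ where $m'$ is the resulting number of tied top bidders; this term appears because $s_2+1/k\le s_1\le v_2-2/k$. For $s'_2=s_1+j/k$ with $j=1,\dots,L$, bidder $2$ wins outright and gains $v_2-s'_2=(L+2-j)/k$. So bidder $2$ contributes at least $\sum_{j=2}^{L+1} j/k$.

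\textbf{Combining.} Adding the two contributions gives
\[
\hat b_s\ \ge\ s_1+\sum_{j=2}^{L+1}\frac{j}{k}-\sum_{j=1}^{L}\frac{j}{k}
= s_1+\frac{L}{k} = v_2-\frac{2}{k}.
\]
This is the claimed bound.

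The main obstacle is this telescoping pairing. Bidder $1$'s deviation terms are individually negative and grow quadratically in $k$. One has to see that pairing $s'_1=s_1+j/k$ with bidder $2$'s deviation to the same bid $s_1+j/k$ leaves a net gain of exactly $(L+1-j)/k-j/k$. Summed over $j$, these gains collapse to a linear quantity, namely $v_2-2/k-s_1$. The rest is bookkeeping of ties, which only helps, since all the tie-related terms carry the favorable sign.
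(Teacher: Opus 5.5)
Your proof is correct and is essentially the paper's argument: you bound each bidder-$1$ term below by $-(s'_1-s_1)$, observe that bidder $2$'s terms vanish below $s_1$, are nonnegative at $s'_2=s_1$, and equal $v_2-s'_2$ above $s_1$, and then sum, with your indexing by $j$ replacing the paper's closed-form quadratic algebra and your separate case $s_1\ge v_2-1/k$ making explicit a boundary situation that the paper's computation implicitly assumes away. The only slip is in your closing remark, where the per-$j$ net gain should be $(L+2-j)/k-j/k$ rather than $(L+1-j)/k-j/k$ (the latter sums to $0$); your displayed computation is unaffected.
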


	\begin{restatable}{lemma}{lb}\label{l:2}
		Let a bidding profile $s \in \mathcal{S}$ such that $s_2 = \max(s_1,\ldots,s_n)$ and $s_1 < s_2$ then $b_s \geq v_2 - 2/k$.
	\end{restatable}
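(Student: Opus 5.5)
The plan is to evaluate $\hat b_s$ from \eqref{eq:16} directly. We show that bidder~$2$'s ``loss'' terms are exactly compensated by bidder~$1$'s ``gain'' terms over the common deviation range, and we drop every leftover term as nonnegative. Write $D:=v_2-s_2$ and $d:=kD\in\mathbb{Z}_{\ge 0}$ (recall $s_2\le v_2$ since $s_2\in\mathcal{S}_2$). Here $r(s)=s_2$. Bidder~$2$ is among the highest bidders, possibly tied with $m\ge 1$ bidders $i\ge 3$. Bidder~$1$ is not, since $s_1<s_2$.

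\textbf{Bidder 2's terms.} For every $s_2'\ge s_2+1/k$ bidder~$2$ wins alone, so $U_2(s_2',s_{-2})=v_2-s_2'$. Also $U_2(s_2,s_{-2})=(v_2-s_2)/(m+1)\le v_2-s_2$. Hence each summand satisfies $U_2(s_2,s_{-2})-U_2(s_2',s_{-2})\le s_2'-s_2$. Writing $s_2'=s_2+j/k$, the range $s_2+1/k\le s_2'\le v_2-2/k$ is $j=1,\dots,J$ with $J:=d-2$. So bidder~$2$ subtracts at most $\sum_{j=1}^{J} j/k$. If $d\le 2$ this range is empty.

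\textbf{Bidder 1's terms.} Since bidder~$1$ currently loses, $U_1(s_1,s_{-1})=0$. Every summand therefore equals $-U_1(s_1',s_{-1})\le 0$, so after the minus sign in \eqref{eq:16} it contributes $+U_1(s_1',s_{-1})\ge 0$. Because $s_1+1/k\le s_2$, the deviation range of bidder~$1$ contains every $s_1'=s_2+j/k$ with $1\le j\le J$. For these bids bidder~$1$ wins outright, and $U_1=v_1-s_1'\ge v_2-s_2-j/k$ because $v_1\ge v_2$. All remaining deviations of bidder~$1$, including the tie at $s_1'=s_2$, contribute nonnegatively and can be discarded.

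\textbf{Combining.} Collecting the bounds gives
\[
\hat b_s\;\ge\; s_2+\frac1k\sum_{j=1}^{J}\bigl(d-2j\bigr)\;=\;s_2+\frac{J(d-J-1)}{k}\;=\;s_2+\frac{d-2}{k}\;=\;v_2-\frac2k
\]
when $d\ge 2$, using $J=d-2$. When $d<2$ both relevant ranges are empty and we get $\hat b_s\ge r(s)=s_2\ge v_2-1/k$ directly.

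The only delicate step is this exact cancellation. Termwise, bidder~$1$'s gain $v_2-s_2-j/k$ does not dominate bidder~$2$'s loss $j/k$ for large $j$. The inequality only closes after summing, where the arithmetic sum leaves precisely the slack $v_2-s_2-2/k$ needed to lift $s_2$ up to $v_2-2/k$. This is also why the cutoff $v_2-2/k$ in Definition~\ref{d:2} is the right one. Ties with bidders $i\ge 3$ only decrease $U_2(s_2,s_{-2})$ and so only help.
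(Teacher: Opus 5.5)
Your proof is correct and takes the same route as the paper. Both arguments bound each of bidder~$2$'s summands by $s_2'-s_2$, use $U_1(s_1,s_{-1})=0$, discard bidder~$1$'s nonnegative non-winning and tie terms, and lower bound $U_1(s_1',s_{-1})\ge v_2-s_1'$ for $s_1'>s_2$, so the two arithmetic sums combine to $-(v_2-s_2)+2/k$. Your reindexing by $j$ is tidier than the paper's closed-form sums, correctly uses $r(s)=s_2$ (the paper writes $s_1$), and explicitly handles the degenerate case $k(v_2-s_2)<2$.
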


	\begin{restatable}{lemma}{lc}\label{l:3}
		Let a bidding profile $s \in \mathcal{S}$ such that $s_2 = \max(s_1,\ldots,s_n)$ and $s_1 = s_2$ then $b_s \geq v_1 - 2/k$.\end{restatable}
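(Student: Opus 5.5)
The plan is to compute $\hat b_s$ in closed form from \eqref{eq:16} on this class of profiles and then bound the resulting arithmetic sums from below. Write $r:=r(s)=s_1=s_2$ and let $m\ge 2$ be the number of bidders tied at $r$. By Definition~\ref{d:cost}, $U_i(s_i,s_{-i})=(v_i-r)/m$ for $i\in\{1,2\}$. Every deviation appearing in \eqref{eq:16} has $s'_i\ge r+1/k$, so bidder $i$ becomes the sole winner and $U_i(s'_i,s_{-i})=v_i-s'_i$. Hence
\[
\hat b_s \;=\; r+\sum_{i\in\{1,2\}}\ \sum_{s'=r+1/k}^{v_2-2/k}\Big[(v_i-s')-\tfrac{v_i-r}{m}\Big].
\]
The right-hand side depends only on $r$, $m$, $v_1$ and $v_2$. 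The point is that the $\hat\lambda$ penalties now enter with a \emph{favourable} sign: the inner terms are gains from overbidding, and they are subtracted in $\hat b_s$.

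\textbf{Case $r\le v_2-3/k$.} Here the sum has $L:=k(v_2-r)-2\ge 1$ terms. Each term is bounded below using $m\ge 2$:
\[
(v_i-s')-\tfrac{v_i-r}{m}\;\ge\;\tfrac{v_i+r}{2}-s'.
\]
Summing this arithmetic progression in $s'$ gives an explicit quadratic in $r$, namely
\[
\sum_{i\in\{1,2\}}\Big(L\cdot\tfrac{v_i+r}{2}-L\cdot\tfrac{r+v_2-1/k}{2}\Big)
\;=\;\tfrac{L}{2}\big(v_1-v_2+1/k\big)+\tfrac{L}{2}\big(1/k\big).
\]
I would then add $r$ and check that $r+\tfrac L2(v_1-v_2+2/k)\ge v_1-2/k$, using $L\ge 1$ and $L/k=v_2-r-2/k$. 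This is the bookkeeping I expect to carry the argument when $v_1-v_2$ is small relative to $L/k$: the bidder-$1$ terms grow like $L(v_1-v_2)$ and compensate for the gap between $r$ and $v_1$.

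\textbf{Case $r\ge v_2-2/k$.} Here the sum is empty and $\hat b_s=r$. Since $s_2\le v_2$, the desired inequality reduces to $r\ge v_1-2/k$, which uses $r\ge v_2-2/k$ together with a comparison of $v_1$ with $v_2$.

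\textbf{Main obstacle.} This boundary regime, and the analogous tail of the first case where $v_1-v_2$ is large, is the step I expect to be hardest. The penalty terms give no help there, so the bound on $\hat b_s$ must come from $r$ alone. I would first try to exploit the standing normalisation used in Lemma~\ref{l:main}, where the target is stated as $1-3/k$, suggesting $v_1=1$ together with $v_2$ being close to $v_1$ on the profiles that matter. If that normalisation is not available, this case is exactly where the argument must be repaired. The repair would be either to add bidder-$1$ deviations up to $v_1-2/k$ in Definition~\ref{d:2}, which makes the sum non-empty and of size about $k(v_1-r)$, or to accept the weaker target $v_2-2/k$ in this regime. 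Both cases would then be combined with Lemmas~\ref{l:1} and~\ref{l:2}.
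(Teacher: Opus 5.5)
Your setup is right, but the proposal cannot be completed as aimed, because the bound $\hat b_s \ge v_1 - 2/k$ is false as literally stated; the $v_1$ is a typo. Take $n=2$, $v_1=1$, $v_2=0$, $k\ge 3$. Then $s_1=s_2=0$, the range $s_i+1/k\le s_i'\le v_2-2/k$ in Definition~\ref{d:2} is empty, and $\hat b_s=r(s)=0<1-2/k$.

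The failure is not confined to your second case. Since $r+L/k=v_2-2/k$, the check you defer in the first case, $r+\frac{L}{2}(v_1-v_2+2/k)\ge v_1-2/k$, is equivalent to $(\frac{L}{2}-1)(v_1-v_2)\ge 0$. This fails at $L=1$ whenever $v_1>v_2$, and for $m=2$ your lower bound is exact there. For instance, $v_1=1$, $v_2=1/2$, $k=10$, $s_1=s_2=1/5$ gives $\hat b_s=0.55<0.8$.

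The rescues you propose do not help either. There is no normalisation $v_1=1$, $v_2\approx v_1$: the $1-3/k$ in Lemma~\ref{l:main} is likewise a typo for $v_2-3/k$, as the proof of Lemma~\ref{l:main2} with $\hat\mu=v_2-3/k$ shows. Extending bidder~$1$'s deviations up to $v_1-2/k$ still gives only $\hat b_s=1/2-1/k$ in the $v_2=0$ example.

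What the paper's proof actually establishes (its last line) is $\hat b_s\ge v_2-2/k$, and that is all Lemma~\ref{l:main2} needs. With this target your computation is already a complete proof, so you should have committed to it rather than treating it as a fallback:
in the first case $r+\frac{L}{2}(v_1-v_2+2/k)\ge r+L/k=v_2-2/k$ because $v_1\ge v_2$;
in the second case $\hat b_s=r\ge v_2-2/k$.

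This is essentially the paper's route. The paper bounds the two tied utilities jointly by $v_1-s_1$, charges them only against bidder~$1$'s deviation utilities, and sums to $(\mathrm{A})+(\mathrm{B})\le s_1+2/k-v_2$. Your per-bidder bound $(v_i-r)/m\le (v_i-r)/2$ is slightly sharper, since it keeps the extra $\frac{L}{2}(v_1-v_2)$. You also handle the empty-range case $r\ge v_2-2/k$ explicitly, which the paper's closed-form summation passes over.
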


	\begin{restatable}{lemma}{ld}\label{l:4}
		Let a bidding profile $s \in \mathcal{S}$ such that $s_1 \neq \max(s_1,\ldots,s_n)$ and $s_2 \neq \max(s_1,\ldots,s_n)$\footnote{Notice that the fact that $v_1\geq v_2\geq \ldots \geq v_n$ does imply that the bids $s_1, s_2, \ldots,s_n$ follow the same order.} then $b_s \geq v_2 - 3/k$.
	\end{restatable}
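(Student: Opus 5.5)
The plan is to work directly with the expression \eqref{eq:16} for $\hat{b}_s$ and split into two cases according to the revenue $r := r(s)$. First I record what the hypothesis gives. Since neither bidder $1$ nor bidder $2$ attains $\max(s_1,\ldots,s_n)$, both lose with probability one under $s$. Hence $U_1(s_1,s_{-1}) = U_2(s_2,s_{-2}) = 0$, and \eqref{eq:16} becomes
\[
\hat{b}_s = r + \sum_{i\in\{1,2\}} \sum_{s'_i = s_i+1/k}^{v_2-2/k} U_i(s'_i, s_{-i}).
\]
Every deviation bid in these sums satisfies $s'_i \le v_2 - 2/k \le v_i$ for $i \in \{1,2\}$, using $v_1 \ge v_2$. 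By Definition~\ref{d:cost}, each such $U_i(s'_i,s_{-i})$ is therefore a nonnegative quantity $(v_i - s'_i)$ times a probability of winning. So every summand is nonnegative and $\hat{b}_s \ge r$.

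\emph{Case $r \ge v_2 - 3/k$.} Dropping the nonnegative sums gives $\hat{b}_s \ge r \ge v_2 - 3/k$, and this case is done.

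\emph{Case $r < v_2 - 3/k$.} On the grid this means $r \le v_2 - 4/k$, so $r + 1/k \le v_2 - 2/k$ (in fact $\le v_2 - 3/k$). I use only bidder $1$'s sum and the single deviation $s'_1 = r + 1/k$.
\begin{itemize}
\item This bid lies in the summation range. Bidder $1$ is not at the maximum, so $s_1 < r$, which gives $s_1 + 1/k \le r < r + 1/k$; the upper end was checked above.
\item With the others' bids $s_{-1}$ held fixed, $\max_{j\neq 1} s_j = r$, because the maximum under $s$ is attained by someone other than bidder $1$. Bidding $r + 1/k$ therefore wins outright.
\item The resulting utility is $v_1 - r - 1/k \ge v_2 - r - 1/k$.
\end{itemize}
Since all other summands are nonnegative, $\hat{b}_s \ge r + (v_2 - r - 1/k) = v_2 - 1/k \ge v_2 - 3/k$.

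The only step needing care is checking that the chosen deviation falls in the window $[s_i + 1/k,\, v_2 - 2/k]$ of Definition~\ref{d:2}. That check is exactly what fixes the threshold between the two cases: the threshold sits at $v_2 - 3/k$ so that $r + 1/k \le v_2 - 2/k$ in the second case. The remaining arguments only use the nonnegativity of losing and deviating utilities, which the hypothesis of Lemma~\ref{l:4} makes immediate.
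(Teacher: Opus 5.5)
Your proof is correct, and it takes a different, more economical route than the paper.

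\textbf{The paper's route.} It does no case split. For both bidders $1$ and $2$ it keeps every deviation $s'_i \ge s^\star + 1/k$, each of which wins outright. It sums the resulting arithmetic series in closed form to get $\hat{b}_s \ge 2s^\star + k(v_2 - s^\star)^2 - v_2 - 2/k$. It then uses the factorization $(s^\star - v_2)(2 + k(s^\star - v_2)) \ge -1/k$, valid because $s^\star \le v_{i^\star} \le v_2$, to reach $v_2 - 3/k$.

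\textbf{Your route.} You split on $r$. If $r \ge v_2 - 3/k$ you drop all deviation terms. Otherwise a single winning deviation $r + 1/k$ by bidder $1$ already suffices.

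\textbf{What each buys.} The paper's computation quantifies the slack: with $m = k(v_2 - s^\star)$ its bound reads $v_2 + (m^2 - 2m - 2)/k$, so the dual constraint is very loose when revenue is low. Your argument needs no summation algebra and never uses $v_{i^\star} \le v_2$. It also handles $s^\star \in \{v_2 - 1/k, v_2\}$ cleanly. In that range the paper's closed-form sum is formally applied to an empty summation range, and its final expression remains a valid lower bound only because $\hat{b}_s \ge s^\star$ holds trivially there.

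Your argument also gives slightly more than stated. The window condition $r + 1/k \le v_2 - 2/k$ only needs $r \le v_2 - 3/k$. Moving your threshold to $v_2 - 2/k$ therefore gives $\hat{b}_s \ge v_2 - 2/k$ for these profiles, which is the same bound the proofs of the other three cases deliver.
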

	\noindent The proof of Lemma~\ref{l:1} is presented in Section~\ref{s:proof1}. The proof of Lemma~\ref{l:2} is similar and is deferred in Appendix~\ref{app:1}. Due to lack of space, the proofs of Lemma~\ref{l:3} and Lemma~\ref{l:4} are also deferred to Appendix~\ref{app:1}.
\end{proof}

\subsection{Proof of Lemma~\ref{l:1}}\label{s:proof1}
In this section we provide the proof of Lemma~\ref{l:1}.
\la*
\begin{proof}
	Since $s_1 = \max(s_1,\ldots,s_n)$ and $s_2 < s_1$, we know that bidder~$2$ does not win the item. Thus $U_2(s_2,s_{-2}) =0$. Since $s_1 = \max(s_1,\ldots,s_n)$ we know that bidder $1$ is among the winners meaning that $U_1(s_1,s_{-1}) \leq v_1 - s_1$. Notice that $U_1(s_1,s_{-1})$ does not necessarily equals $v_1 - s_1$ since in case of a tie,
	the item is given to one of the winner uniformly at random.

	Since $s_1 = \max(s_1,\ldots,s_n)$, we also know that if bidder $1$ submits any bid $s_1' \geq s_1 + \frac{1}{k}$, it will be the only winner and thus $U_1(s_1',s_{-1}) = v_1 - s'_1$.
	By Equation~\ref{eq:16} in the proof of Lemma \ref{l:main} we know that
	\begin{eqnarray*}
		\hat{b}_s &=& s_1 -\underbrace{\sum_{s_1' = s_1 + 1/k}^{v_2 - 2/k} \left(U_1(s_1,s_{-1})- U_1(s'_1,s_{-1}) \right)}_{(\mathrm{A})}\\
              &-&\underbrace{\sum_{s_2' = s_2 + 1/k}^{v_2 - 2/k} \left(U_2(s_2,s_{-2}) - U_2(s'_2,s_{-2}) \right)}_{(\mathrm{B})}
	\end{eqnarray*}
	Up next, we upper bound the terms $(\mathrm{A})$ and $(\mathrm{B})$ separately. Starting with the term $(\mathrm{A})$, we first  remind that $U_1(s_1',s_{-1}) = v_1 - s'_1$ and $U_1(s_1,s_{-1}) \leq v_1 - s_1$. Thus,
	\begin{eqnarray*}
		(\mathrm{A}) &:=& \sum_{s_1' = s_1 + 1/k}^{v_2 - 2/k} \left(U_1(s_1,s_{-1}) - U_1(s'_1,s_{-1}) \right)\\ &\leq& \sum_{s_1' = s_1 + 1/k}^{v_2 - 2/k} \left((v_1 - s_1) - (v_1 - s'_1) \right)\\
        &=& \sum_{s_1' = s_1 + 1/k}^{v_2 - 2/k} (s_1' - s_1)\\ &=& \frac{1}{k} \left( \frac{(kv_2 - 1)(kv_2 - 2)}{2} - \frac{ks_1(ks_1 + 1)}{2} \right)\\
        &-& s_1\cdot (kv_2 - 2 - k\cdot s_1 -1 +1 )\\
        &=& \frac{1}{2k}\left(k^2 \cdot v_2^2 - 3 k \cdot v_2 + 2 - k^2 \cdot s_1^2 - k \cdot s_1 \right)\\
        &-& \frac{1}{k}\left( k^2\cdot s_1v_2 - 2k\cdot s_1 - k^2 \cdot s^2_1\right)
	\end{eqnarray*}
	We now continue with $(\mathrm{B})$. First notice that bidder~$2$ does not win the item since $s_2 < s_1$. Thus, $U_2(s_2,s_{-2}) =0$. For the exact same reason $U_2(s'_2,s_{-2}) =0$ for all bids $s'_2 \leq s_1 -1$. As a result,
	\begin{eqnarray*}
		(\mathrm{B})&:=&\sum_{s_2' = s_2 + 1/k}^{v_2 - 2/k} \left(U_2(s_2,s_{-2}) - U_2(s'_2,s_{-2}) \right)\\ &=& \sum_{s_2' = s_2 + 1/k}^{s_1 -1} \left(\underbrace{U_2(s_2,s_{-2})}_{0} - \underbrace{U_2(s'_2,s_{-2})}_{0} \right)\\
       &+& \sum_{s_2' = s_1}^{v_2 -1/k} \left(\underbrace{U_2(s_2,s_{-2})}_{0} - U_2(s'_2,s_{-2}) \right) \\
       &=&
           -U_2 (s_1,s_{-2}) - \sum_{s'_2 = s_1 + 1/k}^{v_2 - 2/k} \underbrace{U_2(s'_2,s_{-i})}_{v_2 -s'_2}\\
       && \text{since agent }2 \text{ wins with }s'_2 \geq s_1 + 1/k\\
       &=& -\underbrace{U_2 (s_1,s_{-2})}_{\leq 0} - \sum_{s'_2 = s_1 + 1/k}^{v_2 - 2/k}(v_2 - s_2')\\
       &\leq& -v_2(k \cdot v_2 - 2 - k  \cdot s_1 - 1 + 1)\\ &+&  \frac{1}{k} \left( \frac{(kv_2 - 1)(kv_2 - 2)}{2} - \frac{ks_1(ks_1 + 1)}{2} \right) \\
       &\leq& -\frac{1}{k}(k^2 \cdot v_2^2 - 2v_2k - k^2  \cdot s_1v_2)\\
       &+&  \frac{1}{2k}\left(
           k^2\cdot v_2^2 - 3 k \cdot v_2 + 2 - k^2 \cdot s_1^2 - k \cdot s_1
           \right)
	\end{eqnarray*}
	Putting everything together we get that,
	\begin{eqnarray*}
		(\mathrm{A}) + (\mathrm{B}) &\leq& \frac{1}{k}\left(
                  k^2\cdot v_2^2 - 3 k \cdot v_2 + 2 - k^2 \cdot s_1^2 - k \cdot s_1
                  \right)\\
              &-& \frac{1}{k}\left( k^2\cdot s_1v_2 - 2k\cdot s_1 - k^2 \cdot s^2_1\right)\\
              &-&  \frac{1}{k}(k^2 \cdot v_2^2 - 2 k \cdot v_2 - k^2  \cdot s_1 v_2)\\
              &=& -v_2 + \frac{2}{k} + s_1
	\end{eqnarray*}

	Thus, $\hat{b}_s := s_1 -(\mathrm{A}) - (\mathrm{B}) \geq  v_2 - \frac{2}{k}$.
\end{proof}

\section{Experimental Evaluations}\label{s:exps}
In this section, we experimentally evaluate the time-average revenue produced when all bidders use the no-swap regret algorithm of \cite{BM07}, which guarantees $\mathcal{O}(\sqrt{kT})$ swap regret.

As discussed earlier, Theorem~\ref{t:main} implies that, with high probability,
\begin{equation}\label{eq:17}
	\frac{1}{T} \sum_{t=1}^T r(t) \geq v_2 - \frac{3}{k} - \mathcal{O}\left(\frac{k^{2.5}}{\sqrt{T}}\right).
\end{equation}

We first consider a discrete first-price auction with $k = 10$ and $3$ bidders with valuations $v_1 = 1 \geq v_2 \geq v_3 = 0$. To capture different cases, we examine $v_2 \in \{0, 0.2, 0.5, 0.8, 1\}$. Figure~\ref{fig:1} presents our results, showing that in all cases, the time-average revenue always surpasses $v_2 - 1/k = v_2 - 0.1$, significantly faster than $\mathcal{O}(k^7)$.

\newcommand{\Image}[1]{

	\includegraphics[width=0.45\textwidth,trim={2em 0em 4.5em 4em}, clip]{#1}
}
\begin{figure}[H]
	\centering
	\Image{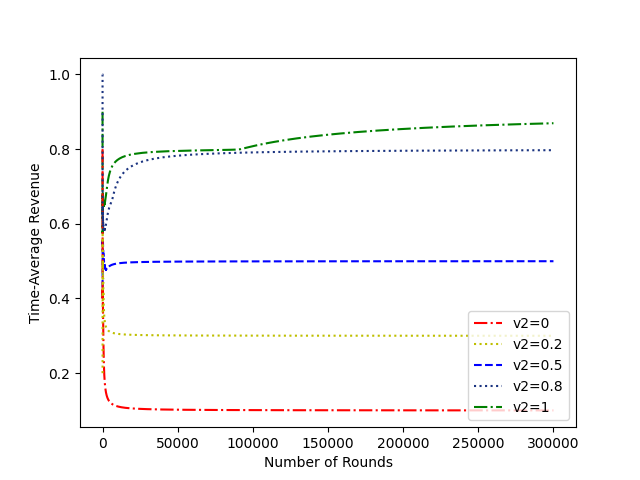}
	\caption{Time-averaged revenue for different values of $v_2$.}
	\label{fig:1}
\end{figure}

Next, we examine the trade-off between the discretization level $k$ and produced revenue. Equation~(\ref{eq:17}) suggests that larger $k$ values yield higher long-term revenue but at the cost of slower convergence. To evaluate this, we consider $3$ bidders with valuations $v_1 = 1$, $v_2 = 1$, and $v_3 = 0$, participating in auctions with $k \in \{10, 30, 50\}$. Figure~\ref{fig:2} confirms that higher $k$ values indeed lead to greater revenue but a slower rate. Additionally, our second experiment confirms that revenue converges to $v_2$ much faster than our theoretical analysis predicts.
\begin{figure}[h]
	\centering
	\Image{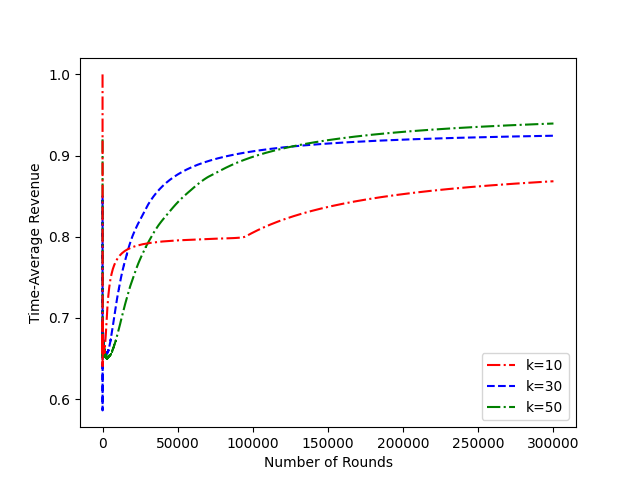}
	\caption{Impact of discretization level $k$ on revenue convergence.}
	\label{fig:2}
\end{figure}

We now experimentally evaluate the revenue produced by no-swap regret dynamics in the Bayesian setting where the valuation of each bidder $i \in [n]$ is sampled according to a probability distribution $\mathcal{P}_i \in \Delta(\mathcal{B})$. We consider the i.i.d. setting where each bidder $i \in [n]$ admits either $v_i =0$ or $v_i =1$ with probability $1/2$. In case of $n$ bidders then expected second highest valuation admits $\mathbb{E}[v_2] = 1 - (n+1)/2^{-n}$. We evaluate the algorithm of \cite{BM07} for $n = \{2,3,4,5\}$ and $k = 10$. We remark that in the Bayesian setting each bidder runs on parallel two different no-swap regret algorithms - one responsible for the bids in case $v_i =0$ and one responsible for the bids in case $v_i = 1$. In the following figure, we present our experimental findings indicating that even in the Bayesian setting the revenue produced converges to $\mathbb{E}[v_2]$.
\begin{figure}[h]
	\label{f:bayesian}
	\centering

	\Image{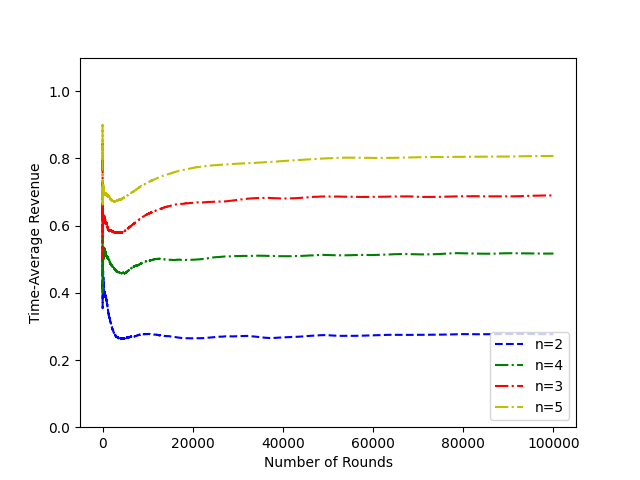}
	\caption{Revenue produced for different number of bidders.}
\end{figure}

We additionally examine the convergence properties of no-swap regret dynamics to Bayes Mixed Nash Equilibrium for $n=2$ and i.i.d. valuations uniform in $\{0,1\}$ (exactly as above). In case of continuous bidding space $\mathcal{B} = [0,1]$, there exists a unique Bayes Mixed Nash Equilibrium $s^\star$ where $s^\star(0) = 0$ and $s^\star(1) \sim \mathcal{P}$ with CDF, $F_{\mathcal{P}}(x) = 1/(1-x) - 1$ for all $x \in [0,1/2]$.

Our experimental findings reveal that the time-average bidding behavior of the bidders is very close to Bayes Mixed Nash Equilibrium. In particular for $v_i =0$, bidder $i$ always bids $0$ while for $v_i = 1$ bidder $i$ randomly selects its bid according to a CDF that closely approximates $F_{\mathcal{P}}(x) = \frac{1}{1-x} - 1$ for $x \in [0,1/2]$.

We have deferred figures and a more comprehensive explanation of our experiments in the Bayesian Setting to the Appendix~\ref{s:bayes}, but our results complement the experimental findings of~\cite{BFO23} (see Remark~\ref{r:10}).
\section{Conclusion and Future Directions}
In this work, we study the revenue guarantees of approximate correlated equilibrium in discrete first-price auctions. In particular, we establish that any $\epsilon$-approximate correlated equilibrium guarantees revenue of at least $v_2 - \Theta(1/k) - \Theta(k^2 \cdot \epsilon)$. Our results imply that if all bidders use no-swap regret algorithms, then the time-averaged revenue converges close to the second-highest valuation within a polynomial number of rounds with respect to the number of possible bids. To the best of our knowledge, this is the first result quantifying the convergence properties of the revenue produced by online learning agents in first-price auctions.

Our work leaves open several interesting research directions. First, achieving faster convergence rates for revenue is a compelling direction for future research. Our work leaves a gap between the upper and lower bounds of an $\epsilon$-approximate correlated equilibrium. Specifically, Theorem~\ref{t:main} establishes a bound of $v_2 - \Theta(1/k) - \Theta(k^2 \cdot \epsilon)$, whereas Theorem~\ref{t:lower} provides a bound of $v_2 - \Theta(1/k) - \Theta(\epsilon)$. Since our experimental evaluations indicate faster convergence rates than the ones predicted by Theorem~\ref{t:main}, we conjecture that the right bound is the one given in Theorem~\ref{t:lower}. Another very interesting direction is extending our dual fitting approach in the Bayesian setting, where bidders' valuations are i.i.d. Our experimental results indicate that the revenue generated by no-swap regret agents indeed approaches the expected value of the second-highest valuation.

\textbf{Acknowledgments}
This project was supported by the Villum Young Investigator Award (Grant no. 72091).

\section*{Impact Statement}

This paper presents work whose goal is to advance the field of Machine Learning. There are many potential societal consequences of our work, none which we feel must be specifically highlighted here.

\bibliography{refs}
\bibliographystyle{plain} 
\appendix
\onecolumn

\section{Revisiting the argument in \cite{FLN16}}\label{s:previous}
In this section we establish Theorem \ref{t:feldman} by extending the very elegant argument of \cite{FLN16} in the context of discrete first-price auctions. We note that the purpose of this section is solely to provide intuition and the reader can directly proceed to Section~\ref{s:main} where the proof of Theorem~\ref{t:main} is presented.
\told*
\begin{proof}
	To simplify notation we denote $\Pr_{\mu}[\cdot]$ as $\Pr[\cdot]$. Let $p^\star \in \mathcal{B}$ denote the minimum bid with non-zero probability that can win the item,
	\[p^\star := \min \{p \in \mathcal{B}: \mathrm{Pr}[ \max_{k \in [n]} s_k \leq p] > 0 \}.\]
	In case $p^\star \geq v_2 - 3/k$ notice that the proof is already completed. Thus without loss of generality we assume that $p^\star \leq v_2 - 4/k$. Thus there exists a bid $p \in \mathcal{B}$ such that
	$p \in [p^\star, \frac{p^\star + v_2}{2} - \frac{1}{k})$.
	\smallskip

	\noindent Now consider the bidders $\{1,2\}$ and notice that
	\[\sum_{j \in \{1,2\}
		}\mathrm{Pr}[ s_j = \max_{k \in [n]}s_k \text{ and }s_j \leq p] \leq \mathrm{Pr}[ \max_{k \in [n]}s_k \leq p].\]
	Thus, there exists a bidder $j\in
  \{1,2\}$ such that
	\begin{equation}\label{eq:12}
		\mathrm{Pr}[ s_j = \max_{k \in [n]}s_k \text{ and }s_j \leq p] \leq \frac{1}{2}\cdot \mathrm{Pr}[ \max_{k \in [n]}s_k \leq p.]
	\end{equation}
	Up next, we construct the following switching function for this bidder $j \in \{1,2 \}$, $\delta:\mathcal{S}_j \mapsto \mathcal{S}_j$,\[
		\delta(s_j)=
		\begin{cases}
			p+1/k & \text{if } s_j \leq p \\
			s_j   & \text{otherwise }
		\end{cases}.\]
	Since $\mu \in \Delta(\mathcal{S})$ is an exact correlated equilibrium we know that,
	\begin{equation}\label{eq:ce1}
		\mathbb{E}[U_j(\delta(s_j),s_{-j})] - \mathbb{E}[U_j(s_j,s_{-j})] \leq 0.
	\end{equation}
	\smallskip
	\noindent Let $\mathcal{S}':= \{ s \in \mathcal{S}:~ s_j \leq p \}$. Notice that for any $s\notin \mathcal{S}'$, $\delta(s_j) = s_j$. As a result,
	\begin{equation}\label{eq:ce}
		\mathbb{E}[U_j(\delta(s_j),s_{-j})] - \mathbb{E}[U_j(s_j,s_{-j})] = \sum_{s \in \mathcal{S}'} \mu(s) \cdot U_j(\delta(s_j),s_{-j}) - \sum_{s \in \mathcal{S}'} \mu(s) \cdot U_j(s_j,s_{-j}).
	\end{equation}
	Let us further partition $\mathcal{S}':= (\mathcal{S}',\mathcal{S}'/\mathcal{S}'')$ where $\mathcal{S}'':= \{ s \in \mathcal{S'}:~ s_j = \max_{k \in [n]}s_k \}$. In simpler words, $\mathcal{S}''$ contains the bidding profiles of $s \in \mathcal{S}'$ at which agent $j \in \{1,2\}$ wins the item. Notice that for any $s \notin \mathcal{S}''$, $U_j(s_j,s_{-j}) =0$ and $U_j(\delta(s_j),s_{-j}) =v_j - s_j$
	since $s_j$ is not the maximum bid while $\delta(s_j) = p + 1/k$ is definitely the maximum bid\footnote{$\mathcal{S}'' \subseteq \mathcal{S}'$ and thus $\max_{k \in [n]} s_k \leq p$}. As a result,
	\begin{eqnarray}
		\sum_{s \notin \mathcal{S}''} \mu(s) \cdot \left(U_j(\delta(s_j),s_{-j}) -  U_j(s_j,s_{-j})\right) &=& (v_2 - p - 1/k)\cdot \mathrm{Pr}[ s_j \neq \max_{k \in [n]}s_k \text{ and }s_j \leq p] \nonumber\\
                                                                               &\geq& (v_2 - p - 1/k)\cdot \mathrm{Pr}[ \max_{k \in [n]}s_k \leq p]\cdot \frac{1}{2} \label{eq:57}
	\end{eqnarray}
	Notice that for any $s \in \mathcal{S}''$,  $U_j(s_j,s_{-j}) \leq v_j - s_j$ and $U_j(\delta(s_j),s_{-j}) =v_j - \delta(s_j)$. As a result,
	\begin{eqnarray}
		\sum_{s \in \mathcal{S}''} \mu(s) \cdot \left(U_j(\delta(s_j),s_{-j}) -  U_j(s_j,s_{-j})\right) &\geq& \sum_{s \in \mathcal{S}''} \mu(s) \cdot \left( s_j - \delta(s_j) \right)\nonumber\\
                                                                          &\geq& \left(p^\star - p - 1/k \right) \mathrm{Pr}[ s_j = \max_{k \in [n]}s_k \text{ and }s_j \leq p]\label{eq:imp}\\
                                                                          &\geq& \left(p^\star - p - 1/k \right) \mathrm{Pr}[ s_j = \max_{k \in [n]}s_k]/2 \label{eq:58}
	\end{eqnarray}
	where Equation~(\ref{eq:imp}) is due to $p^\star := \min \{p \in \mathcal{B}: \mathrm{Pr}[ \max_{k \in [n]}s_k \leq p] > 0 \}$. Then by Equation~(\ref{eq:57}) and~(\ref{eq:58}) we get that
	\[\mathrm{Pr}[ \max_{k \in [n]}s_k \leq p]\cdot \left(\frac{v_2+p^
        \star}{2} - p - 1/k\right) \leq \sum_{s \notin \mathcal{S}'} \mu(s) \cdot \left(U_j(\delta(s_j),s_{-j}) -  U_j(s_j,s_{-j})\right) \leq 0   \]
	which is a contradiction since $p \leq \frac{v_2+p^
    \star}{2} - 1/k$.
\end{proof}

\noindent In the rest of the section, we sketch how the arguments in the proof of Theorem~\ref{t:feldman} can be extended so as to establish that an $\epsilon$-approximate correlated equilibrium admits revenue greater than $v_2 - \mathcal{O}(1/k) - \epsilon \cdot k ^ {\Theta(\log k)
}$. The latter approach is presented in further detail in Appendix~\ref{s:previous_new}.
\smallskip

\noindent Up next we build upon the proof of Theorem \ref{t:feldman}. Let us assume that $\mu$ is an $\epsilon$-approximate correlated equilibrium. Then by selecting $p^\star =0$ and using in the RHS of Equation (\ref{eq:ce1}), $\epsilon>0 $ instead of $0$ , one could establish that for all $p \in \mathcal{B}$ with $p
\in \left[0,\frac{v_2}{2}-
	\frac{1}{k}\right)$,
\begin{equation}\label{eq:15}
	\mathrm{Pr}[ \max_{k \in [n]} s_k \leq p] \left(\frac{v_2+p^
      \star}{2} - p - 1/k\right) \leq \epsilon.
\end{equation}
As a result,one could conclude that
\begin{equation}\label{eq:189}
	\mathrm{Pr}[ \max_{k \in [n]} s_k \leq p] \leq \mathcal{O}(\epsilon k) \text{ for  all } p
	\in \left[0,\frac{v_2}{2}-
    \frac{1}{k}\right).
\end{equation}
\noindent Then the exact same arguments can be repeated for $p^
\star = v_2/2$ and one can conclude that
\begin{equation}\label{eq:190}
	\mathrm{Pr}[ \max_{k \in [n]}s_k \leq p] \leq \mathcal{O}(\epsilon k^2) \text{ for  all } p \in \left[\frac{v_2}{2},\frac{3v_2}{4}-\frac{1}{k}\right).
\end{equation}
The extra $k$ in Equation~(\ref{eq:190}), comes from the fact that the probability of agent $k$ winning the item with bid $s_k \leq p^\star - 1$ is not $0$ (as in the proof of Theorem \ref{t:feldman}) but rather $
\mathcal{O}(k \epsilon)$ (see Equation~(\ref{eq:189})). Then by the fact that $\left(\frac{v_2+p^
		\star}{2} - p - 1/k\right) \geq 1/k$ we get the $
\mathcal{O}(\epsilon  k^2)$ bound (see Claim~\ref{la:1}).
\smallskip

\noindent By repeating the same argument inductively, one can argue that
for all $\alpha \leq \log(k/3) - 1$,
\[\mathrm{Pr}[ \max_{k \in [n]}s_k \leq p] \leq \mathcal{O}(\epsilon\cdot k^a) \text{ for  all } p \in \Delta(a)\]
where $\Delta(a) := \left[(1 - 2^{-a})v_2,(1 - 2^{-a-1})v_2-\frac{1}{k}\right]$ (see Claim~\ref{la:1}). By using the latter, one can argue that the revenue of an $\epsilon$-correlated equilibrium is at least $v_2 - \frac{1}{k} - \epsilon \cdot k^{\mathcal{O}(\log k)}$ (see Claim \ref{la:2}).

\subsection{Omitted Proofs of Section~\ref{s:previous}}\label{s:previous_new}
In this section we present a more detailed sketch on how the argument of \cite{FLN16}
presented in Appendix~\ref{s:previous} can be extended to $\epsilon$-approximate correlated equilibrium. We remind that the goal is to establish that any $\epsilon$-approximate correlated equilibrium $\mu(
\cdot
)$, its revenue $\sum_{s} \mu(s) \cdot r(s) \geq v_2 - \mathcal{O}(1/k) - k^{\mathcal{O}(\log k)}\cdot \epsilon
$.

We start by first upper bounding the probability the revenue to be less than a value $p$, $\Pr\left[
  \max(s_1,\ldots,s_n) \leq p\right]$.
\begin{claim}\label{la:1}
	Let an integer $\alpha \leq \log (k/3) -1$. Then for any $p \in \left[(1  - 2^{-a})v_2, (1  - 2^{-a -1})v_2 - \frac{1}{k}\right)$
	\[\Pr\left[
			\max(s_1,\ldots,s_n) \leq p\right] \leq \mathcal{O}(\epsilon \cdot k^{a+1})\]
\end{claim}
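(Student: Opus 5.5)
We prove Claim~\ref{la:1} by induction on $\alpha$, repeating the deviation argument from the proof of Theorem~\ref{t:feldman} with an $\epsilon$ slack. Set $q_\alpha := (1-2^{-\alpha})v_2$, so that the claimed window is $p\in[q_\alpha, q_{\alpha+1}-1/k)$. The inductive hypothesis for level $\alpha$ is that $\Pr[\max_j s_j \le p'] \le C\epsilon k^{\alpha}$ for all $p' < q_\alpha$. For $\alpha=0$ this hypothesis is vacuous, since $q_0=0$ and bids are nonnegative. Monotonicity of $p'\mapsto \Pr[\max_j s_j\le p']$ means it suffices to control the largest grid point below $q_\alpha$.

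\textbf{Inductive step.} Fix $p$ in the window. As in Theorem~\ref{t:feldman}, pick $j\in\{1,2\}$ satisfying \eqref{eq:12}, i.e. bidder $j$ is a (co-)winner with $s_j\le p$ on at most half of the event $\{\max_i s_i\le p\}$. Apply the switching function $\delta(s_j)=p+1/k$ for $s_j\le p$, and identity otherwise. The $\epsilon$-CE condition (Definition~\ref{d:CE}) bounds the gain of this deviation by $\epsilon$. The profiles in which bidder $j$ does not win but $\max_i s_i\le p$ contribute a gain of at least
\[
(v_2-p-1/k)\cdot\tfrac12\Pr[\max_i s_i\le p].
\]
The profiles in $\mathcal{S}''$, where bidder $j$ was already winning with $s_j\le p$, contribute a loss of at most $(p+1/k-s_j)$ each. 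Split these according to whether $s_j\ge q_\alpha$ or $s_j<q_\alpha$.
\begin{itemize}
\item If $s_j\ge q_\alpha$, the loss is at most $p+1/k-q_\alpha$, on mass at most $\tfrac12\Pr[\max_i s_i\le p]$.
\item If $s_j<q_\alpha$, the loss is at most $1$, on mass at most $\Pr[\max_i s_i<q_\alpha]\le C\epsilon k^{\alpha}$ by the inductive hypothesis.
\end{itemize}
Combining,
\[
\Pr[\max_i s_i\le p]\cdot\left(\tfrac{v_2+q_\alpha}{2}-p-\tfrac1k\right)\le \epsilon + C\epsilon k^{\alpha}.
\]
Since $(v_2+q_\alpha)/2=q_{\alpha+1}$ and $p<q_{\alpha+1}-1/k$ on the grid, the bracket is at least $1/k$. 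Hence
\[
\Pr[\max_i s_i\le p]\le k\epsilon(1+Ck^{\alpha}) = \mathcal{O}(\epsilon k^{\alpha+1}).
\]

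The window at level $\alpha$ together with the earlier windows covers all $p'<q_{\alpha+1}-1/k$. The condition $\alpha\le\log(k/3)-1$ ensures that the window has width $2^{-\alpha-1}v_2-1/k$, which is of order at least $1/k$, so grid points exist and the bracket bound holds. The remaining gap points $p\in[q_{\alpha+1}-1/k, q_{\alpha+1})$ are handled at the next level: using monotonicity and the next window, one can simply take the hypothesis to cover $p'\le q_\alpha-1/k$ and shift $q_\alpha$ accordingly.

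\textbf{Main obstacle.} The main difficulty is the constant blow-up. Done naively, each level multiplies by $k$ and also adds a factor $(1+C)$, so $C$ could grow geometrically in $\alpha$. This is acceptable, because $\alpha=\mathcal{O}(\log k)$ and the resulting $2^{\mathcal{O}(\log k)}$ factor is absorbed into $k^{\mathcal{O}(\log k)}$. To keep the recursion clean, I would therefore state the bound as $\epsilon (c k)^{\alpha+1}$ with an absolute constant $c$, so that the induction closes exactly.
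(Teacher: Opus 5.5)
Your proposal is essentially the paper's own proof of this claim. It uses the same induction over the dyadic windows, the same pigeonhole choice of $j\in\{1,2\}$ with the switching function $\delta(s_j)=p+1/k$ on $\{s_j\le p\}$, the same split of the loss on $\mathcal{S}''$ at the previous threshold (lower part bounded by the inductive hypothesis), and the same final division by a bracket of size $1/k$. It therefore inherits the paper's loose ends: the pigeonhole over co-winners fails when bidders $1$ and $2$ tie, which is repaired by using tie-broken win shares instead of the indicator of being a co-winner; and the bracket is guaranteed to be $\ge 1/k$ only when $q_{\alpha+1}$ is a grid point.

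Where you diverge, your \emph{main obstacle} is not an obstacle. Starting from $C_0=0$, your own bound $\Pr[\max_i s_i\le p]\le k\epsilon(1+C_\alpha k^{\alpha})$ gives $C_{\alpha+1}=C_\alpha+k^{-\alpha}\le 2$, so the constant stays absolute. You should not weaken the statement to $\epsilon(ck)^{\alpha+1}$, which for $c>1$ is not uniformly $\mathcal{O}(\epsilon k^{\alpha+1})$ over $\alpha\le\log(k/3)-1$.
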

\begin{proof}[Proof Sketch]
	The case of $a=0$ can be established with the argument presented in Section \ref{s:previous}.

	\noindent Up next, we present the induction Step. Let $p^\star = (1 - 2^{-a - 1})v_2$. By the induction hypothesis we know that $\Pr\left[
    \max(s_1,\ldots,s_n) \leq p^
    \star - 1/k\right] \leq \mathcal{O}(\epsilon \cdot k^{a+1})$. Since $a \leq \log(k/3) - 1$ there exists at least one valid bid $p
  \in \left[p^\star, \frac{p^\star + v_2}{2} - \frac{1}{k}\right)$. As in the proof of Theorem~\ref{t:feldman}, there exists a bidder $j \in
  \{1,2\}$ such that
	\[\Pr\left[
			s_j = \max_{k \in [n]} s_k
			\text{ and } s_j\leq p\right] \leq
		\frac{1}{2}\cdot\Pr\left[\max_{k \in [n]} s_k \ \right]\]
	Now construct the following switching function $\delta(\cdot
  )$ for agent $j$,
	\[
		\delta(s_j)=
		\begin{cases}
			p+1/k & \text{if } s_j \leq p \\
			s_j   & \text{otherwise }
		\end{cases}\]
	To simplify notation let $A:=\mathbb{E}[U_j(\delta(s_j),s_{-j})] - \mathbb{E}[s_j,s_{-j})] \leq \epsilon$. Since $\mu$ is an $\epsilon$-approximate correlated equilibrium, we know that $A \leq \epsilon$. Up next we present Proposition~\ref{p:1} that provides a lower bound on $A$.
	\begin{proposition}\label{p:1}
		$A\geq \mathrm{Pr}[ \max_{k \in [n]}s_k \leq p]\cdot \left(\frac{v_2+p^
				\star}{2} - p - 1/k\right) - \mathrm{Pr}[ \max_{k \in [n]}s_k \leq p^\star - 1]$
	\end{proposition}

	\begin{proof}
		\noindent Let $\mathcal{S}':= \{ s \in \mathcal{S}:~ s_k \leq p \}$. Notice that for any $s\notin \mathcal{S}'$, $\delta(s_j) = s_j$. As a result,
		\begin{equation}\label{eq:ce}
			A = \sum_{s \in \mathcal{S}'} \mu(s) \cdot U_j(\delta(s_j),s_{-j}) - \sum_{s \in \mathcal{S}'} \mu(s) \cdot U_j(s_j,s_{-j}).
		\end{equation}
		Let us further partition $\mathcal{S}':= (\mathcal{S}',\mathcal{S}'/\mathcal{S}'')$ where $\mathcal{S}'':= \{ s \in \mathcal{S'}:~ s_j = \max_{k \in [n]}s_k \}$. In simpler words, $\mathcal{S}''$ contains the bidding profiles of $s \in \mathcal{S}'$ at which agent $j \in \{1,2\}$ wins the item. Notice that for any $s \notin \mathcal{S}''$, $U_j(s_j,s_{-j}) =0$ and $U_j(\delta(s_j),s_{-j}) =v_j - s_j$
		since $s_j$ is not the maximum bid while $\delta(s_j) = p + 1/k$ is definitely the maximum bid\footnote{$\mathcal{S}'' \subseteq \mathcal{S}'$ and thus $\max_{k \in [n]} s_k \leq p$}. As a result,
		\begin{eqnarray}
			\sum_{s \notin \mathcal{S}''} \mu(s) \cdot \left(U_j(\delta(s_j),s_{-j}) -  U_j(s_j,s_{-j})\right) &\geq& (v_2 - p - 1/k)\cdot \mathrm{Pr}[ s_j \neq \max_{k \in [n]}s_k \text{ and }s_j \leq p] \nonumber\\
                                                                                 &\geq& (v_2 - p - 1/k)\cdot \mathrm{Pr}[ \max_{k \in [n]}s_k \leq p]\cdot \frac{1}{2} \label{eq:57b}
		\end{eqnarray}
		where the last inequality comes from the fact that $\Pr\left[
      s_j = \max_{k \in [n]} s_k
      \text{ and } s_j\leq p\right] \leq
    \frac{1}{2}\cdot\Pr\left[\max_{k \in [n]} s_k \ \right]$. Notice that for any $s \in \mathcal{S}''$,  $U_j(s_j,s_{-j}) \leq v_j - s_j$ and $U_j(\delta(s_j),s_{-j}) =v_j - \delta(s_j)$. As a result,
		\begin{eqnarray}
			\sum_{s \in \mathcal{S}''} \mu(s) \cdot \left(U_j(\delta(s_j),s_{-j}) -  U_j(s_j,s_{-j})\right) &\geq& \sum_{s \in \mathcal{S}''} \mu(s) \cdot \left( s_j - \delta(s_j) \right)\nonumber\\
                                                                            &\geq& \left(p^\star - p - 1/k \right) \mathrm{Pr}[ s_j = \max_{k \in [n]}s_k \text{ and } p^\star \leq s_j \leq p]\label{eq:imp}\nonumber\\
                                                                            &+& \left(0 - p - 1/k \right) \mathrm{Pr}[ s_j = \max_{k \in [n]}s_j \text{ and } s_j \leq p^\star - 1]\label{eq:imp}\nonumber\\
                                                                            &\geq& \left(p^\star - p - 1/k \right) \mathrm{Pr}[\max_{k \in [n]}s_k \leq p]/2 \nonumber \label{eq:58b}\\
                                                                            &-& \mathrm{Pr}[ \max_{k \in [n]}s_k \leq p^\star - 1]\label{eq:impb}
		\end{eqnarray}
		Then by adding Equation~(\ref{eq:57b}) and~(\ref{eq:58b}) we get that
		\[\mathrm{Pr}[ \max_{k \in [n]}s_k \leq p]\cdot \left(\frac{v_2+p^
          \star}{2} - p - 1/k\right) - \mathrm{Pr}[ \max_{k \in [n]}s_k \leq p^\star - 1] \leq A  \]
	\end{proof}
	\noindent Now by combining Proposition \ref{p:1} with the fact that $A \leq \epsilon$ we get that,
	\begin{eqnarray*}
		\Pr\left[
    \max(s_1,\ldots,s_n) \leq p\right] \cdot \left(\frac{v_2+p^
    \star}{2}- p - 1/k \right) &\leq&    \Pr\left[\max(s_1,\ldots,s_n) \leq p^\star - 1\right] + \epsilon\\
	\end{eqnarray*}
	Since $\left(\frac{v_2+p^
			\star}{2}- p - 1/k \right) \geq 1/k$ we finally get that
	\begin{eqnarray*}
		\Pr\left[
    \max(s_1,\ldots,s_n) \leq p\right] &\leq& k \cdot   \Pr\left[\max(s_1,\ldots,s_n) \leq p^\star - 1\right] + \epsilon\cdot k \leq \mathcal{O}(k^
                                   {a+2} \cdot \epsilon)
	\end{eqnarray*}
	which completes the induction step.
\end{proof}

\begin{claim}\label{la:2}
	Let an approximate $\epsilon$-approximate correlated equilibrium $\mu(
  \cdot
  )$ then $\sum_{s} \mu(s) \cdot r_s \geq v_2 - \mathcal{O}(1/k) - k^{\mathcal{O}(\log k)}\epsilon.$
\end{claim}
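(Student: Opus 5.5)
We derive Claim~\ref{la:2} from Claim~\ref{la:1} by expressing revenue through tail probabilities. Since $r(s) \geq 0$ takes values in the grid $\mathcal{B}$, we have
\[
\sum_s \mu(s)\, r(s) \;=\; \sum_{p \in \mathcal{B},\, p<1} \tfrac{1}{k}\,\Pr\left[\max_{j} s_j > p\right] \;\geq\; \sum_{p \in \mathcal{B},\, p < v_2 - c/k} \tfrac1k\left(1 - \Pr\left[\max_j s_j \leq p\right]\right)
\]
for a constant $c$ to be fixed. The right-hand side is at least $v_2 - \mathcal{O}(1/k) - \tfrac1k\sum_{p< v_2 - c/k}\Pr[\max_j s_j\leq p]$. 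It therefore suffices to bound each tail probability $\Pr[\max_j s_j \leq p]$ for grid points $p$ below $v_2 - \mathcal{O}(1/k)$ by $k^{\mathcal{O}(\log k)}\epsilon$, because there are at most $k+1$ such points and each carries weight $1/k$.

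To bound the tails, we cover the interval $[0, v_2 - c/k)$ by the dyadic blocks $\Delta(a) = [(1-2^{-a})v_2, (1-2^{-a-1})v_2 - 1/k)$ for $a = 0,1,\ldots,a_{\max}$, where $a_{\max} = \lfloor\log(k/3)\rfloor - 1$. For every grid point $p$ in some $\Delta(a)$, Claim~\ref{la:1} gives $\Pr[\max_j s_j\leq p] \leq \mathcal{O}(\epsilon k^{a+1}) \leq \epsilon\, k^{\mathcal{O}(\log k)}$, since $a \leq \log k$. Two boundary issues remain. First, the grid points lying in the gaps $[(1-2^{-a-1})v_2 - 1/k, (1-2^{-a-1})v_2)$ between consecutive blocks are not covered; for these we use monotonicity of the CDF and bound the probability by that of the next block, $\Pr[\max_j s_j \leq p] \leq \Pr[\max_j s_j \leq p']$ with $p' \in \Delta(a+1)$. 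Second, the last block ends at $(1-2^{-a_{\max}-1})v_2 \geq v_2 - \mathcal{O}(v_2/k)$ because $2^{-a_{\max}-1} = \Theta(1/k)$. The uncovered top segment therefore has length $\mathcal{O}(1/k)$ and costs only $\mathcal{O}(1/k)$ in revenue, which we absorb by dropping it from the sum.

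Summing the bounds, the lost revenue is at most $\tfrac1k \cdot (k+1) \cdot \mathcal{O}(\epsilon k^{a_{\max}+2}) = \epsilon\, k^{\mathcal{O}(\log k)}$. This gives revenue at least $v_2 - \mathcal{O}(1/k) - k^{\mathcal{O}(\log k)}\epsilon$.

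The main obstacle is the bookkeeping at block boundaries together with the constant in the exponent: the block index of a point and the exponent $a$ in Claim~\ref{la:1} must match, and the exponent must be applied correctly to the final block. A secondary issue is small $v_2$, where some blocks contain no grid point; there $v_2 = \mathcal{O}(1/k)$ and the bound holds trivially. Once these are handled, the result follows from Claim~\ref{la:1} with no further equilibrium reasoning.
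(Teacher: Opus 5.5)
Your proposal is correct and follows the paper's route: write the revenue as $\frac{1}{k}\sum_{p}\Pr[\max_j s_j > p]$, truncate at $v_2 - \mathcal{O}(1/k)$, and bound each $\Pr[\max_j s_j \leq p]$ by Claim~\ref{la:1} on the dyadic blocks $\Delta(a)$. The paper weights each block by its length to get a geometric sum, while your count of at most $k+1$ grid points gives the same $k^{\mathcal{O}(\log k)}\epsilon$, and your monotonicity treatment of the inter-block gaps is more careful than the paper's sketch. One small correction: empty blocks are not specific to small $v_2$ --- a block can lack grid points only when $2^{-a-1}v_2 < 2/k$, which means it starts within $4/k$ of $v_2$ for any $v_2$, so taking $c = 7$ absorbs such blocks (and any gap point just before them) into the dropped top segment.
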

\begin{proof}[Proof Sketch]
	\begin{eqnarray*}
		\sum_{s} \mu(s) \cdot r_s  &=& \sum_{p= 0}^1 \mathrm{Pr}[\max(s_1,\ldots,s_n) \geq p] / k \\
                      &\geq& \sum_{p= 0}^{v_2(1- 3/k)} \mathrm{Pr}[\max(s_1,\ldots,s_n) \geq p] / k \\
                      &=& \sum_{p= 0}^{v_2(1- 3/k)}\left(1 - \mathrm{Pr}[\max(s_1,\ldots,s_n) \leq p]\right) / k \\
                      &=& v_2(1 - 3/k) - \frac{1}{k} \sum_{a = 0}^{\log (k/3)} \sum_{p = (1 - 2^{-a})v_2}^{(1 - 2^{-a-1})v_2 - 1/k} \mathcal{O}(\epsilon \cdot k^{\alpha+1})\\
                      &\geq& v_2(1 - 3/k) - \sum_{a =0}^{\log(k/3) - 1} \frac{2^{-a}}{2} \cdot \mathcal{O}(k^a \cdot \epsilon)\\
                      &\geq& v_2 - \mathcal{O}(1/k) - k^{\mathcal{O}(\log k)}\cdot \epsilon\\
	\end{eqnarray*}
\end{proof}

\section{Proof of Theorem~\ref{t:lower}}\label{a:lower}
\tlower*
\begin{proof}
	Let us consider the following joint distribution $\mu \in \Delta(\mathcal{S})$,
	\[\mu(s_1,s_2)=
		\begin{cases}
			0            & \text{if } s_1  \neq s_2           \\
			\epsilon/k   & \text{if } s_1  = s_2 \leq 1 - 1/k \\
			1 - \epsilon & \text{if } s_1  = s_2 = 1          \\
		\end{cases}\]
	We first show that $\mu \in \Delta(\mathcal{S})$ is an $\epsilon$-approximate correlated equilibrium. Let bidder $i \in \{1,2\}$, then for any $s=(s_i,s_{-i})$ with $s_i \leq 1 - 1/k$, the best response would be bidding $s_i + 1/k$ since in this case it would win the item for sure instead of probability $1/2$. As a result, for any switching function $\delta: \mathcal{S}_i \mapsto \mathcal{S}_i$,
	\begin{eqnarray}
		\mathbb{E}_{s \sim \mu}[ U_i(\delta(s_i),s_{-i})] - \mathbb{E}_{s \sim \mu}[ U_i(s_i,s_{-i})] &\leq& \sum_{j=0}^{1 - 1/k} \left(1 - \frac{j+1}{k}\right) \cdot \frac{\epsilon}{k} - \sum_{j=0}^{1 - 1/k} \left(1 - \frac{j}{k}\right) \cdot \frac{\epsilon}{2k}\nonumber\\
                                                                 &=& \sum_{j=0}^{1 - 1/k} \left(\frac{1}{2} -\frac{1}{k} \right) \cdot \frac{\epsilon}{k} \leq \epsilon\nonumber\\
	\end{eqnarray}
	Now concerning the revenue of the this correlated equilibrium $\mu \in \Delta(S)$,
	\[\sum_{s\in \mathcal{S}} \mu(s) \cdot r(s) = 1 - \epsilon + \frac{\epsilon}{k} \cdot \sum_{j=1}^{1 - 1/k} j = 1 - \epsilon + \frac{\epsilon}{k^2} \cdot \frac{(k-1)k}{2} = 1 - \frac{\epsilon}{2} - \frac{\epsilon}{2k}\]
\end{proof}

\section{Proof of Theorem~\ref{t:folkore}}\label{a:folkore}
\tf*
\begin{proof}
	We remind that by Chernoff bounds, if $X_1, ..., X_n$ are independent random variables taking values in $[0,1]$. Then for any $\delta >0$,
	\begin{equation}\label{eq:chernof}
		\mathrm{Pr} \left[ \sum_{t=1}^T X_t \geq (1 + \delta) \sum_{t=1}^T \mathbb{E}\left[X_t\right] \right] \leq e^{-\delta \cdot \sum_{t=1}^T \mathbb{E}\left[X_t\right]}
	\end{equation}
	Let us fix an agent $i \in [n]$ as well as a switching function $\delta: \mathcal{S}_i \rightarrow \mathcal{S}_i$. Now consider $X_t:= U_i(\delta(s_i^t),s_{-i}^t) - U_i(s_i^t,s_{-i}^t)$ and notice that $\sum_{t=1}^T \mathbb{E}\left[ X_t \right] \leq \mathcal{R}_i^{\text{swap}}(T)$. Then $\delta = \epsilon T / \sum_{t=1}^T \mathbb{E}\left[ X_t \right] - 1$ and Equation~(\ref{eq:chernof}) we get that,
	\begin{equation*}\label{eq:new}
		\mathrm{Pr} \left[ \sum_{t=1}^T X_t \geq \epsilon \cdot T \right]  = \mathrm{Pr} \left[ \sum_{t=1}^T X_t \geq (1 + \delta) \sum_{t=1}^T \mathbb{E}\left[X_t\right] \right] \leq e^{-\left( \epsilon T / \sum_{t=1}^T \mathbb{E}\left[ X_t \right] - 1\right) \cdot \sum_{t=1}^T \mathbb{E}\left[X_t\right]} \leq e^{-\epsilon \cdot T + \mathcal{R}_i^{\text{swap}}(T)}
	\end{equation*}
	Since there $n$ bidders and since there are at most $k^k$ different switching functions $\delta: \mathcal{S}_i \mapsto  \mathcal{S}_i$ ($|\mathcal{S}_i| = k$), by union bound we get that ,
	\[ \mathrm{Pr} \left[ \text{there exists agents }i \in [n] \text{ and } \delta:\mathcal{S}_i \mapsto \mathcal{S}_i \text{ with } \sum_{t=1}^T  U_i(\delta(s_i^t),s_{-i}^t) -\sum_{t=1}^T  U_i(s_i^t,s_{-i}^t) \geq \epsilon \cdot T \right] \leq \delta.\]
	after at most $T = \mathcal{O}\left( \max_{i \in [n]} \mathcal{R}_i(T)/\epsilon + \log(n/\delta)/\epsilon + k \log k /\epsilon \right)$ rounds. Thus, with probability at least $1 - \delta$, the following holds for each bidder $i \in [n]$ and switching function $\delta: \mathcal{S}_i \mapsto \mathcal{S}_i$.
	\begin{equation}\label{eq:new}
		\frac{1}{T}\sum_{t=1}^T  U_i(\delta(s_i^t),s_{-i}^t) - \frac{1}{T}\sum_{t=1}^T  U_i(s_i^t,s_{-i}^t) \leq \epsilon.
	\end{equation}
	By the definition of $\hat{\mu} \in \Delta(\mathcal{S})$ as $\hat{\mu}(s) = \frac{1}{T} \cdot \sum_{t=1}^T \mathrm{I}[s_t = s]$ we get that Equation~(\ref{eq:new}) admits the following equivalent form,
	\begin{equation}\label{eq:new2}
		\mathbb{E}_{s_i \sim \hat{\mu}}  U_i(\delta(s_i),s_{-i}^t) - \mathbb{E}_{s_i \sim \hat{\mu}}  U_i(s_i,s_{-i}^t) \leq \epsilon.
	\end{equation}
	and thus that $\hat{\mu} \in \Delta(\mathcal{S})$ is an $\epsilon$-approximate correlated equilibrium.
\end{proof}

\section{Omitted Proofs of Section \ref{s:dual_fit}}\label{app:1}

\lone*
\begin{proof}
	Since $\mu \in \Delta(\mathcal{S})$ then $\sum_{s \in \mathcal{S}}  \mu(s)  = 1$ and $\mu(s) \geq 0$ for all $s
  \in \mathcal{S}$. Notice that by Definition \ref{d:CE} we know that for any bidder $i \in [n]$ and switching function $\delta:\mathcal{S}_i
  \mapsto \mathcal{S}_i$,
	\[
		\sum_{s_i}
		\sum_{s_{-i}} \mu(s_i, s_{-i})
		\cdot \left( U_i(s_i,s_{-i}) - U_i(\delta(s_i),s_{-i})
		\right)
		\geq - \epsilon\]
	Let $s_i,s'_i
  \in \mathcal{S}_i$ and consider the switching function $\delta(s_i) = s'_i$ and $\delta(x) = x$ otherwise. Then we directly get that
	\[
		\sum_{s_{-i}} \mu(s_i,s_{-i})
		\cdot \left( U_i(s_i,s_{-i}) - U_i(s'_i,s_{-i})
		\right)
		\geq - \epsilon~~~~~\text{ for all }s \in \mathcal{S}.\]
\end{proof}

\dual*
\begin{proof}
	By taking the Lagragian
	\[L := \sum_s \mu(s) \cdot r(s) - \sum_{i,s_i,s'_{i}} \lambda^i_{s_i,s'_i}\left(\epsilon + \sum_{s_{-i}} \mu(s_i,s_{-i}) \cdot ( U_i(s_i,s_{-i}) - U_i(s'_i,s_{-i}))\right) + \mu\left(1 - \sum_{s} \mu(s)\right) - \sum_s k_s \cdot \mu(s)\]
	where $\lambda_{s_i,s_{-i}}^i,k_s \geq 0$. By rearranging the terms we get that
	\[L := \sum_s \mu(s) \left( r(s) + \sum_{s_i'}\lambda^i_{s_i,s'_i}(U_i(s'_i,s_{-i}) - U_i(s_i,s_{-i})) - \mu - k_s \right) + \mu - \epsilon \sum_{i,s_i,s'_{i}} \lambda^i_{s_i,s'_i}\]

	By setting $r(s) - \sum_{i\in [n]} \sum_{s'_i \in \mathcal{S}_i} \lambda^i_{s_i,s'_i}
  \left( U_i(s_i,s_{-i}) - U_i(s'_i,s_{-i})\right) - \mu - k_s = 0$ we get that
	\[\mu + \sum_{i\in [n]} \sum_{s'_i \in \mathcal{S}_i} \lambda^i_{s_i,s'_i}
		\left( U_i(s_i,s_{-i}) - U_i(s'_i,s_{-i})\right) \leq r(s)\]
	since $k_s \geq 0$.
\end{proof}

\lb*
\begin{proof}
	Since $s_2 = \max(s_1,\ldots,s_n)$ and $s_1 < s_2$, we know that bidder~$1$ does not win the item. Thus $U_1(s_1,s_{-1}) =0$. Since $s_2 = \max(s_1,\ldots,s_n)$ we know that bidder $2$ is among the winners meaning that $U_2(s_2,s_{-2}) \leq v_2 - s_2$. The latter is due to the fact that in case of a tie, the item is given to one of the winner uniformly at random. Since $s_2 = \max(s_1,\ldots,s_n)$, we also know that if bidder $2$ submits any bid $s_2' \geq s_2 + \frac{1}{k}$, it will be the only winner and thus $U_2(s_2',s_{-2}) = v_2 - s'_2$.\\

	By Equation (\ref{eq:16}) in the proof of Lemma \ref{l:main2} we know that
	\[\hat{b}_s = s_1 -\underbrace{\sum_{s_1' = s_1 + 1/k}^{v_2 - 2/k} \left(U_1(s_1,s_{-1}) - U_1(s'_1,s_{-1}) \right)}_{(\mathrm{A})} -\underbrace{\sum_{s_2' = s_2 + 1/k}^{v_2 - 2/k} \left(U_2(s_2,s_{-2}) - U_2(s'_2,s_{-2}) \right)}_{(\mathrm{B})}\]
	Up next, we upper bound the terms $(\mathrm{A})$ and $(\mathrm{B})$ separately.

	Starting with the term $(\mathrm{B})$, we remind that $U_2(s_2',s_{-2}) = v_2 - s'_2$ and $U_2(s_2,s_{-2}) \leq v_2 - s_2$.
	\begin{eqnarray*}
		(\mathrm{B}) &:=& \sum_{s_2' = s_2 + 1/k}^{v_2 - 2/k} \left(U_2(s_2,s_{-2}) - U_2(s'_2,s_{-2}) \right) \leq \sum_{s_2' = s_2 + 1/k}^{v_2 - 2/k} \left((v_2 - s_2) - (v_2 - s'_2) \right)\\
        &=& \sum_{s_2' = s_2 + 1/k}^{v_2 - 2/k} (s_2' - s_2) = \frac{1}{k} \left( \frac{(kv_2 - 1)(kv_2 - 2)}{2} - \frac{ks_2(ks_2 + 1)}{2} \right) - s_2\cdot (kv_2 - 2 - k\cdot s_2 -1 +1 )\\
        &=& \frac{1}{2k}\left(k^2 \cdot v_2^2 - 3 k \cdot v_2 + 2 - k^2 \cdot s_2^2 - k \cdot s_2 \right) - \frac{1}{k}\left( k^2\cdot s_2v_2 - 2k\cdot s_2 - k^2 \cdot s^2_2\right)
	\end{eqnarray*}
	At the same time, since $s_1 < s_2 = \max(s_1,\ldots,s_n)$ we have that $U_1(s_1,s_{-1}) =0$ (bidder $1$ does not win the item). The same holds for all bids $s'_1 \leq s_2 -1$, meaning that $U_1(s'_1,s_{-1}) =0$. As a result, we get that
	\begin{eqnarray*}
		(\mathrm{A})&:=&\sum_{s_1' = s_1 + 1/k}^{v_2 - 2/k} \left(U_1(s_1,s_{-1}) - U_1(s'_1,s_{-1}) \right)\\
       &=& \sum_{s_1' = s_1 + 1/k}^{s_2 -1} \left(\underbrace{U_1(s_1,s_{-1})}_{0} - \underbrace{U_1(s'_1,s_{-1})}_{0} \right) + \sum_{s_1' = s_2}^{v_2 -1/k} \left(\underbrace{U_1(s_1,s_{-1})}_{0} - U_1(s'_1,s_{-1}) \right) \\
       &=&
           -U_1 (s_2,s_{-1}) - \sum_{s'_1 = s_2 + 1/k}^{v_2 - 2/k} \underbrace{U_1(s'_1,s_{-1})}_{ \geq v_2 -s'_1}~~~~~\text{since agent }1 \text{ wins the item}\\
       &\leq& -\underbrace{U_1 (s_2,s_{-1})}_{\leq 0} - \sum_{s'_1 = s_2 + 1/k}^{v_2 - 2/k}(v_2 - s_1')\\
       &\leq& -v_2(k \cdot v_2 - 2 - k  \cdot s_2 - 1 + 1) +  \frac{1}{k} \left( \frac{(kv_2 - 1)(kv_2 - 2)}{2} - \frac{ks_2(ks_2 + 1)}{2} \right) \\
       &\leq& -\frac{1}{k}(k^2 \cdot v_2^2 - 2v_2k - k^2  \cdot s_2v_2) +  \frac{1}{2k}\left(
           k^2\cdot v_2^2 - 3 k \cdot v_2 + 2 - k^2 \cdot s_2^2 - k \cdot s_2
           \right)
	\end{eqnarray*}
	Putting everything together we get that
	\begin{eqnarray*}
		&&\sum_{s_1' = s_1 + 1}^{v_2 - 2/k} \left(U_1(s_1,s_{-i}) - U_1(s'_1,s_{-i}) \right)+\sum_{s_2' = s_2 + 1}^{v_2 - 2/k} \left(U_2(s_1,s_{-i}) - U_2(s'_2,s_{-i}) \right)\\ &\leq& \frac{1}{k}\left(
                                                                                                                                                                            k^2\cdot v_2^2 - 3 k \cdot v_2 + 2 - k^2 \cdot s_1^2 - k \cdot s_1
                                                                                                                                                                            \right) - \frac{1}{k}\left( k^2\cdot s_1v_2 - 2k\cdot s_1 - k^2 \cdot s^2_1\right)\\
		&-&  \frac{1}{k}(k^2 \cdot v_2^2 - 2 k \cdot v_2 - k^2  \cdot s_1 v_2)\\
		&=& -v_2 + \frac{2}{k} + s_1
	\end{eqnarray*}
	We thus finally get that
	\begin{eqnarray*}
		\hat{b}_s &=& s_1 -\sum_{s_1' = s_1 + 1/k}^{v_2 - 2/k} \left(U_1(s_1,s_{-1}) - U_1(s'_1,s_{-1}) \right) -\sum_{s_2' = s_2 + 1/k}^{v_2 - 2/k} \left(U_2(s_2,s_{-2}) - U_2(s'_2,s_{-2}) \right)\\
              &\geq& s_1 +v_2 - \frac{2}{k} - s_1 = v_2 - \frac{2}{k}
	\end{eqnarray*}
\end{proof}

\subsection{Proof of Lemma~\ref{l:3}}\label{s:proof2}
In this section we provide the proof of Lemma~\ref{l:3}.
\lc*
\begin{proof}
	By Equation (\ref{eq:16}) in the proof of Lemma \ref{l:main2} we get
	\begin{eqnarray*}
		\hat{b}_s &=& s_1 -\underbrace{\sum_{s_1' = s_1 + 1/k}^{v_2 - 2/k} \left(U^1(s_1,s_{-1}) - U^1(s'_1,s_{-1}) \right)}_{(\mathrm{A})}\\
              &-&\underbrace{\sum_{s_2' = s_2 + 1/k}^{v_2 - 2/k} \left(U^2(s_2,s_{-2}) - U^2(s'_2,s_{-2}) \right)}_{(\mathrm{B})}
	\end{eqnarray*}
	Since $s_2 = s_1$, by rearranging the terms we get $(\mathrm{A}) + (\mathrm{B}) = $
	\begin{eqnarray}
		&&\sum_{s_1' = s_1 + 1/k}^{v_2 - 2/k} \left(\underbrace{\left(U^1(s_1,s_{-1}) + U^2(s_1,s_{-2})  \right)}_{\leq v_1 - s_1} - \underbrace{U^1(s'_1,s_{-1})}_{= v_1 - s'_1} \right) ~~(\mathrm{I})\nonumber\\
		&& - \sum_{s_2' = s_1 + 1/k}^{v_2 - 2/k} \underbrace{U^2(s'_2,s_{-2})}_{= v_2 - s'_2}~~(\mathrm{II})\nonumber
	\end{eqnarray}
	In Equation~$(\mathrm{I})$, $U^1(s_1,s_{-1}) + U^2(s_1,s_{-2}) \leq v_1 - s_1$ is due to the fact that since both bidder $1$ and $2$ submit the exact same bid $s_1 = \max(s_1,\ldots,s_n)$, then there are least $2$ potential winners. As result, bidders $1$ and $2$ win with probability at most $1/2$. Thus,
	\[U^1(s_1,s_{-1})\leq (v_1 - s_1)/2 \text{ and } U^2(s_1,s_{-2})\leq (v_2 - s_1)/2\]
	which means that their sum is at most $v_1 - s_1$ since $v_2 \leq v_1$.

	At the same time, $U_1(s'_1,s_{-1}) = v_1 - s'_1$ comes from the fact that bidder~$1$ is the only winner if it submits any bid $s_1' \geq s_1 + 1/k$. For the exact same reason $U^2(s'_2,s_{-2}) = v_2 - s'_2$ for $s'_2 \geq s_1 + 1/k$ in Equation~$(5)$.
	As a result,
	\begin{eqnarray*}
		(\mathrm{A}) + (\mathrm{B})&\leq& \sum_{s_1' = s_1 + 1/k}^{v_2 - 2/k} (s'_1 - s_1) - \sum_{s_1' = s_1 + 1/k}^{v_2 - 2/k} (v_2 - s'_2)\\
             &=& 2 \cdot \sum_{s_1' = s_1 + 1/k}^{v_2 - 2/k} s'_1 - \sum_{s_1' = s_1 + 1/k}^{v_2 - 2/k} (s_1 + v_2)\\
             &=& \frac{2}{k}\left( \frac{(kv_2 -2)(kv_2 -1)}{2} - \frac{(ks_1 +1)(ks_1)}{2} \right)\\
             &-& (s_1 + v_2) (kv_2 - 2 -ks_1)\\
             &=& k v_2^2 - 3v_2  + \frac{2}{k} - ks_1^2 - s_1 - ks_1 v_2 + 2s_1\\
             &+& k s_1^2 - k v_2^2 + 2v_2 +kv_2s_1\\
             &=& - v_2  + \frac{2}{k} + s_1\\
	\end{eqnarray*}
	Thus, $\hat{b}_s = s_1 -(\mathrm{A}) - (\mathrm{B})\geq v_2 - \frac{2}{k}$.
\end{proof}

\subsection{Proof of Lemma~\ref{l:4}}\label{s:proof4}
\ld*
\begin{proof}
	To simplify notation we denote with $s^\star := \max(s_1,\ldots,s_n)$ and $i^\star := \mathrm{argmax}(s_1,\ldots,s_n)$. Notice that $s^\star \leq v_{i^\star} \leq v_2$. For any bid $s \leq s^\star - 1/k$ we have that $U_1(s,s_{-1}) = U_2(s,s_{-2}) =0$ since neither agent $1$ nor agent $2$ wins the item. At the same time $U_1(s'_1,s_{-1}) = v_1 - s'_1$ for all $s'_1 \geq s^\star + 1$. Similarly $U_2(s'_2,s_{-2}) = v_2 - s'_2$ for all $s'_2 \geq s^\star + 1$. As a result,
	\begin{eqnarray*}
		\hat{b}_s &=& r_s - \sum_{i\in \{1,2\}} \sum_{s'_i=s_i + 1/k}^{v_
                  2 - 2/k} \left(U_i(s_i,s_{-i}) - U_i(s'_i,s_{-i}) \right)\\
              &=& s^\star - \sum_{i\in \{1,2\}}\left( \sum_{s'_i=s_i + 1/k}^{s^\star - 1} \left(\underbrace{U_i(s_i,s_{-i})}_{0} - \underbrace{U_i(s'_i,s_{-i})}_{0} \right) + \sum_{s'_i=s^\star}^{v_2 - 2/k} \left(\underbrace{U_i(s_i,s_{-i})}_{0} - U_i(s'_i,s_{-i}) \right) \right)\\
              &=& s^\star + \sum_{i \in \{1,2\}}\sum_{s'_i=s^\star}^{v_
                  2 - 2/k} U_i(s'_i,s_{-i})\geq s^\star + \sum_{i \in \{1,2\}}\sum_{s'_i=s^\star+1}^{v_
                  2 - 2/k} U_i(s'_i,s_{-i})\\
              &=& s^\star + \sum_{s'_1=s^\star+1}^{v_
                  2 - 2/k} (v_1 - s'_1) + \sum_{s'_2=s^\star+1}^{v_
                  2 - 2/k} (v_2 - s'_2) \geq s^\star + 2\sum_{s'_2=s^\star+1}^{v_
                  2 - 2/k} (v_2 - s'_2)
	\end{eqnarray*}
	As a result,
	\begin{eqnarray*}
		\hat{b}_s &\geq& s^\star + 2\sum_{s'_2=s^\star+1/k}^{v_
                  2 - 2/k} (v_2 - s'_2)\\
              &=& s^\star + 2 v_2 ( kv_2 - 2 -ks^\star + 1 - 1) - 2 \frac{1}{k} \left( \frac{(kv_2 - 2)(kv_2 - 1)}{2} - \frac{ks^\star(ks^\star + 1 )}{2}  \right)\\
              &=& s^\star + 2kv_2^2 - 4v_2 - 2k\cdot s^\star v_2 - \frac{1}{k}( k^2 v_2^2  - 3kv_2 +2 - k^2 s^{\star 2} - ks^\star)\\
              &=& s^\star + 2kv_2^2 - 4v_2 - 2k\cdot s^\star v_2 -k v_2^2  + 3v_2 -\frac{2}{k} + k s^{\star 2} + s^\star\\
              &=& 2 s^\star  + k v_2^2 - v_2 - 2k\cdot s^\star v_2 - \frac{2}{k} + k s^{\star 2}  \\
              &=& 2 s^\star + k (v_2 - s^\star)^2 - v_2 - \frac{2}{k}
	\end{eqnarray*}

	\begin{proposition}\label{p:1}
		If $v_2 \geq s^\star$  then $2s^\star + k(v_2 -s)^2 - v_2 \geq v_2 - \frac{1}{k}$.
	\end{proposition}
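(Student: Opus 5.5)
The plan is to reduce the proposition to a single perfect square by substituting the gap between $v_2$ and the highest bid. I read the term $k(v_2-s)^2$ in the statement as $k(v_2-s^\star)^2$, which is the expression produced at the end of the computation of $\hat{b}_s$ just before the proposition. Set $d := v_2 - s^\star$, which is nonnegative by hypothesis. Then $s^\star = v_2 - d$, and the claimed inequality $2s^\star + k(v_2 - s^\star)^2 - v_2 \geq v_2 - \frac{1}{k}$ becomes
\[
2(v_2 - d) + k d^2 - v_2 \;\geq\; v_2 - \frac{1}{k}.
\]
After cancelling the $v_2$ terms, this is equivalent to $k d^2 - 2d + \frac{1}{k} \geq 0$.

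The key step is to recognise the left-hand side as a scaled square. Factoring out $k>0$ gives
\[
k d^2 - 2d + \frac{1}{k} \;=\; k\left(d^2 - \frac{2d}{k} + \frac{1}{k^2}\right) \;=\; k\left(d - \frac{1}{k}\right)^2,
\]
which is nonnegative. This proves the proposition. It in fact holds for every real $d$, so neither the hypothesis $v_2 \geq s^\star$ nor the discreteness of $\mathcal{B}$ is actually needed. Equality holds exactly when $s^\star = v_2 - 1/k$, so the bound is tight.

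To finish Lemma~\ref{l:4}, I would plug the proposition into the last line of the preceding display:
\[
\hat{b}_s \;\geq\; 2s^\star + k(v_2 - s^\star)^2 - v_2 - \frac{2}{k} \;\geq\; v_2 - \frac{3}{k}.
\]
This is the claimed bound. The hypothesis $v_2 \geq s^\star$ holds because in this case the winner $i^\star$ is some bidder $i \geq 3$, so $s^\star \leq v_{i^\star} \leq v_2$.

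There is no real obstacle in the algebra. The only care needed is in the summation identities feeding into the expression above, namely the index ranges $s^\star+1/k$ through $v_2 - 2/k$. When $s^\star \geq v_2 - 2/k$ these sums are empty or negative in count, so the closed form must be checked separately there. In that range I would argue directly: the sums vanish, so $\hat{b}_s \geq s^\star \geq v_2 - 2/k$. This is consistent with the proposition's bound and gives the same conclusion $\hat{b}_s \geq v_2 - 3/k$.
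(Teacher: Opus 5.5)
Your proof is correct, and it takes a different route from the paper's.

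The paper sets $A := 2s^\star + k(v_2 - s^\star)^2 - 2v_2$ and factors it as $A = (s^\star - v_2)\bigl(2 + k(s^\star - v_2)\bigr)$. It then uses the discreteness of the bid grid: since $v_2 - s^\star \in \{0, 1/k, 2/k, \ldots\}$, there are three cases.
\begin{itemize}
\item $s^\star = v_2$, so $A = 0$.
\item $s^\star = v_2 - 1/k$, so $A = -1/k$.
\item $s^\star \le v_2 - 2/k$, where both factors are nonpositive and hence $A \ge 0$.
\end{itemize}

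You instead complete the square, $kd^2 - 2d + 1/k = k(d - 1/k)^2$, which settles the inequality in one line for every real $d$. This shows that neither the hypothesis $v_2 \ge s^\star$ nor the grid structure of $\mathcal{B}$ is needed. It also makes explicit what the paper's case split only checks at grid points: the quadratic is minimized exactly at $d = 1/k$, with equality there. The paper's factorization is a sign argument tailored to grid values, while your identity is both shorter and more general.

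Your side remark on the edge cases of Lemma~\ref{l:4} is also fine, with one small imprecision. At $s^\star = v_2 - 2/k$ the deviation sum does not literally vanish, since the term $s_i' = s^\star$ gives bidder $i$ a tie and positive utility. However, every such term enters $\hat b_s$ as $+U_i(s_i', s_{-i}) \ge 0$, so $\hat b_s \ge s^\star \ge v_2 - 2/k$ still holds.
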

	\begin{proof}
		We will prove that $A:= 2s^\star + k(v_2 - s^\star)^2 -2v_2 \geq -\frac{1}{k} $. Notice that
		\begin{eqnarray*}
			A= 2(s^\star - v_2) + k(s^\star - v_2)^2 = (s^\star - v_2)(2 + k(s^\star - v_2))
		\end{eqnarray*}
		In case $s^\star = v_2$ then $A =0$. In case $s^\star = v_2 - \frac{1}{k}$ then $A = -\frac{1}{k}$. We conclude with the case where $s^\star \leq v_2 - \frac{2}{k}$. The latter implies that $2 + k(s^\star - v_2) \leq 0$ which in turn implies that $A \geq 0$.
	\end{proof}
	\noindent Up next we conclude with the proof of Lemma~\ref{l:4}.
	\[\hat{b}_s \geq  2 s^\star + k (v_2 - s^\star)^2 - v_2 - \frac{2}{k}
		\geq v_2 - \frac{3}{k} ~~~~~\text{ by Proposition~\ref{p:1}}\]
\end{proof}

\section{Experimental Evaluations in the Bayesian Setting}\label{s:bayes}
In the section we experimentally evaluate the revenue produced by no-swap regret dynamics in the Bayesian setting where the valuation of each bidder $i \in [n]$ is sampled according to a probability distribution $\mathcal{P}_i \in \Delta(\mathcal{B})$. We consider the i.i.d. setting where each bidder $i \in [n]$ admits either $v_i =0$ or $v_i =1$ with probability $1/2$. In case of $n$ bidders then expected second highest valuation admits $\mathbb{E}[v_2] = 1 - (n+1)/2^{-n}$. We evaluate the algorithm of \cite{BM07} for $n = \{2,3,4,5\}$ and $k = 10$. We remark that in the Bayesian setting each bidder runs on parallel two different no-swap regret algorithms - one responsible for the bids in case $v_i =0$ and one responsible for the bids in case $v_i = 1$. In the following figure, we present our experimental findings indicating that even in the Bayesian setting the revenue produced converges to $\mathbb{E}[v_2]$.
\begin{figure*}[h]
	\label{f:bayesian}
	\centering
	\centering

	\includegraphics[width=0.5\linewidth]{reve1.png}
	\caption{Revenue produced for different number of bidders.}
\end{figure*}

We additionally examine the convergence properties of no-swap regret dynamics to Bayes Mixed Nash Equilibrium for $n=2$ and i.i.d. valuations uniform in $\{0,1\}$ (exactly as above). In case of continuous bidding space $\mathcal{B} = [0,1]$, there exists a unique Bayes Mixed Nash Equilibrium $s^\star$ where $s^\star(0) = 0$ and $s^\star(1) \sim \mathcal{P}$ with CDF, $F_{\mathcal{P}}(x) = 1/(1-x) - 1$ for all $x \in [0,1/2]$. In the following figure, we present the time-average bidding behavior of bidder $1$ (the behavior of bidder $2$ is almost identical) for $k = 50$ and $T = 10^4$ rounds.
\begin{figure}[h]
	\centering
	\begin{minipage}{1.0\textwidth}
		\begin{minipage}{0.5\textwidth}
			\centering

			\includegraphics[width=\linewidth]{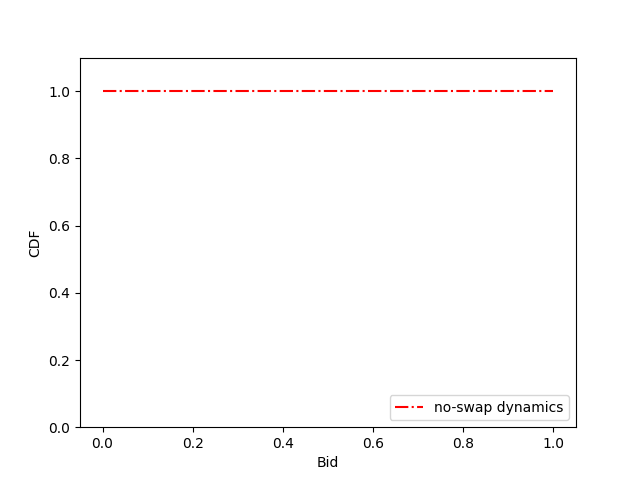}
			{Time-averaged CDF played by bidder $1$ once $v_1 = 0$.}
		\end{minipage}

		\begin{minipage}{0.5\textwidth}
			\centering
			\includegraphics[width=\linewidth]{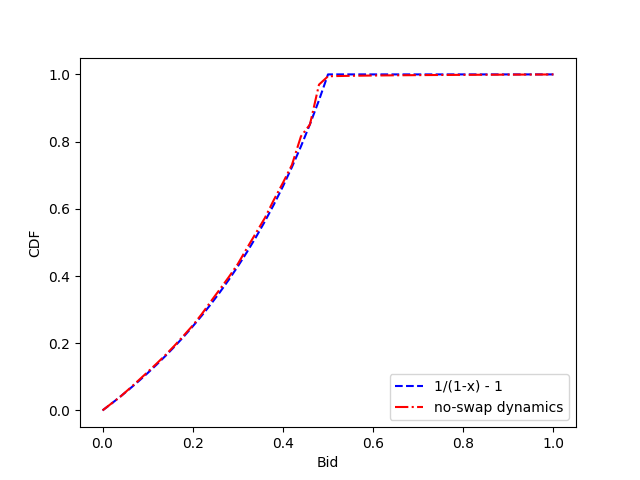}
			{Time-averaged CDF played by bidder $1$ once $v_1 = 1$.}
		\end{minipage}
	\end{minipage}
	\label{f:bayesian}
\end{figure}
Our experimental findings reveal that the time-average bidding behavior of the bidders is very close to Bayes Mixed Nash Equilibrium. In particular for $v_i =0$, bidder $i$ always bids $0$ while for $v_i = 1$ bidder $i$ randomly selects its bid according to a CDF that closely approximates $F_{\mathcal{P}}(x) = \frac{1}{1-x} - 1$ for $x \in [0,1/2]$.

We then repeat the same experiment for the more complex case where the valuation $v_i$ of each bidder $i \in \{1,2\}$ is $0$ with probability $0.25$, $0.5$ with probability $0.25$, and $1$ with probability $0.5$. In this case $\mathcal{B} = [0,1]$, then the Bayes Mixed Nash Equilibrium $s^\star(v_i)$ would be $s^\star(0) = 0$, $s^\star(0.5) \sim \mathcal{P}_{0.5}$, where $F_{\mathcal{P}_{0.5}}(x) = 1/(1 - 2x) - 1$ for $x \in [0,0.25]$, and $s^\star(1) \sim \mathcal{P}_{1}$, where $F_{\mathcal{P}_{1}}(x) = 3/(4 - x) - 1$ for $x \in [0.25, 5/8]$. In the following figure, we present our findings for $k = 30$ and $T = 50000$.
\begin{figure}[h]
	\centering
	\begin{minipage}{1.0\textwidth}
		\begin{minipage}{0.5\textwidth}
			\centering

			\includegraphics[width=\linewidth]{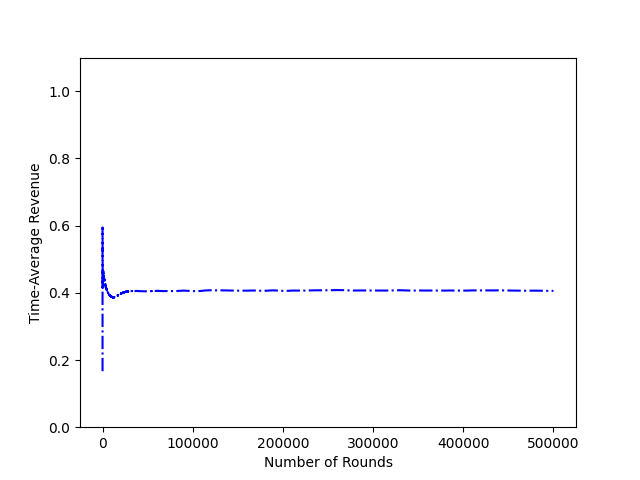}
			{Time-average produced revenue.}
		\end{minipage}

		\begin{minipage}{0.5\textwidth}
			\centering

			\includegraphics[width=\linewidth]{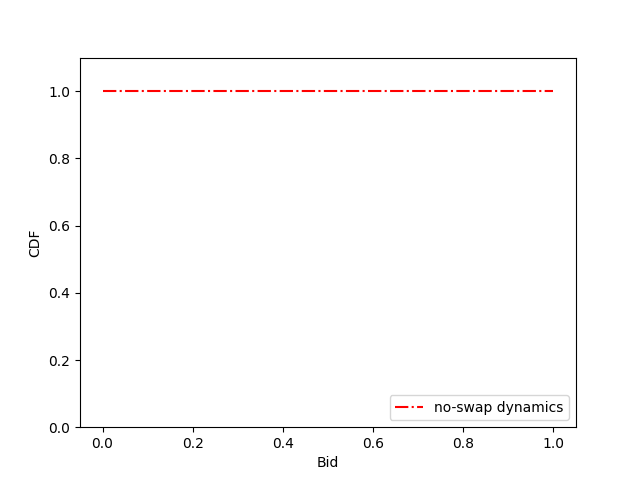}
			{Time-averaged CDF played by bidder $1$ once $v_1 = 0$.}
		\end{minipage}
	\end{minipage}
	\begin{minipage}{1.0\textwidth}    \begin{minipage}{0.5\textwidth}
			\centering

			\includegraphics[width=\linewidth]{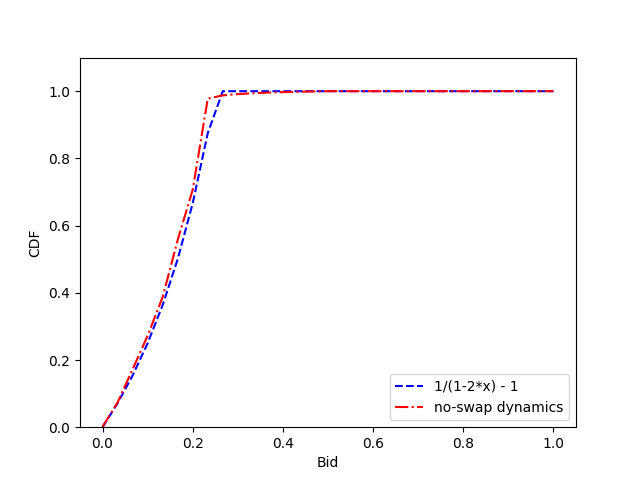}
			{Time-averaged CDF played by bidder $1$ once $v_1 = 0.5$.}
		\end{minipage}

		\begin{minipage}{0.5\textwidth}
			\centering

			\includegraphics[width=\linewidth]{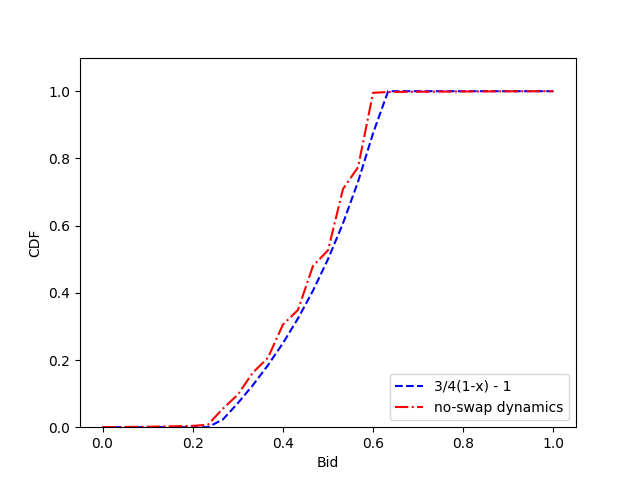}
			{Time-averaged CDF played by bidder $1$ once $v_1 = 1$.}
		\end{minipage}    \end{minipage}
	\label{f:bayesian2}
\end{figure}

\noindent As we can see, not only the time-average revenue converges approximately to the expected second highest valuation ($\simeq 0.4026$), but also the respective distribution converges to the Bayes Mixed Nash Equilibrium.

\begin{remark}\label{r:10}
	\textit{
    In \cite{BFO23}, the authors provide interesting experimental evaluations showing that, in the case of a continuous bidding space ($\mathcal{B} = [0,1]$), online learning dynamics converge to the unique Bayes Pure Nash Equilibrium. It is well known that when the bidding space is continuous and the distribution over bidders' valuations is also continuous, there exists a unique Bayes Pure Nash Equilibrium $s^\star: [0,1] \to [0,1]$. This implies that $s^\star(v)$ corresponds to a deterministic value in $[0,1]$, which is why it is referred to as the Bayes Pure Nash Equilibrium. \cite{BFO23} experimentally demonstrate that if bidders discretize their bidding space according to some discretization parameter $k \in \mathbb{N}$ and adopt online dual averaging algorithms, their strategies approximately converge to the Bayes Pure Nash Equilibrium $s^\star$.}

	\textit{However there are qualitative differences if the bidders retain continuous bidding space $\mathcal{B} = [0,1]$ but the respective distribution has finite support $\mathcal{B}' \subseteq \mathcal{B}$. In this case, there exists a Bayes Mixed Nash Equilibrium in which each $s^\star(v)$ for all $v \in \mathcal{B}'$ is a continuous distribution over $[0,1]$. As a result, the key difference between the experimental findings of \cite{BFO23} and ours is that \cite{BFO23} show convergence to a Bayes Pure Nash Equilibrium if valuations follow a continuous distribution over $[0,1]$, whereas we demonstrate convergence to a Bayes Mixed Nash Equilibrium if valuations follow a probability distribution with discrete support in $[0,1]$.}
\end{remark}

\end{document}